\renewcommand\AA{\boldsymbol{\mathit{A}}}
\newcommand{\tsum}{{\rm sum}}
\newcommand\BB{\boldsymbol{\mathit{B}}}
\newcommand\CC{\boldsymbol{\mathit{C}}}
\newcommand\DD{\boldsymbol{\mathit{D}}}
\newcommand\II{\boldsymbol{\mathit{I}}}
\newcommand\LL{\bm{\mathit{L}}}
\newcommand\PP{\boldsymbol{\mathit{P}}}
\newcommand\MM{\boldsymbol{\mathit{M}}}
\newcommand\RR{\boldsymbol{\mathit{R}}}
\newcommand\QQ{\boldsymbol{\mathit{Q}}}
\def\calG{\mathcal{G}}
\newcommand\aaa{\boldsymbol{\mathit{a}}}
\newcommand\bb{\boldsymbol{\mathit{b}}}
\newcommand\cc{\boldsymbol{\mathit{c}}}
\newcommand\ee{\boldsymbol{\mathit{e}}}
\newcommand\hh{\boldsymbol{\mathit{h}}}
\newcommand\pp{\boldsymbol{\mathit{p}}}
\newcommand\rr{\boldsymbol{\mathit{r}}}
\newcommand\sss{\boldsymbol{\mathit{s}}}
\newcommand\zz{\boldsymbol{\mathit{z}}}
\newcommand\rrho{\boldsymbol{\mathit{\rho}}}
\newcommand\DDelta{\boldsymbol{{\Delta}}}
\newcommand\gap{k_\text{gap}}
\newcommand{\one}{\mathbf{1}}
\newcommand{\zero}{\mathbf{0}}
\newcommand{\rea}{\mathbb{R}}
\newcommand{\Diag}{\rm Diag}
\newcommand{\ProP}{\Pr}
\newcommand{\Expect}{\mathbb{E}}
\newcommand{\Vari}{\mathrm{Var}}
\newtheorem{theorem}{Theorem}
\newtheorem{problem}{Problem}
\newtheorem{lemma}{Lemma}
\newtheorem{corollary}{Corollary}
\title{Opinion Maximization in Social Networks by Modifying Internal Opinions}
\author{
    Gengyu Wang, Runze Zhang, Zhongzhi Zhang\thanks{Corresponding author.}\\
    College of Computer Science and Artificial Intelligence\\
    Fudan University\\
    \texttt{\{gywang24, rzzhang24\}@m.fudan.edu.cn, zhangzz@funda.edu.cn}\\
}
\begin{document}

\maketitle

\begin{abstract}
Public opinion governance in social networks is critical for public health campaigns, political elections, and commercial marketing. In this paper, we addresse the problem of maximizing overall opinion in social networks by strategically modifying the internal opinions of a fixed number of nodes. Traditional matrix inversion methods suffer from prohibitively high computational costs, prompting us to propose two efficient sampling-based algorithms. Furthermore, we develop a deterministic asynchronous algorithm that exactly identifies the optimal set of nodes through asynchronous update operations and progressive refinement, ensuring both efficiency and precision. Extensive experiments on real-world datasets demonstrate that our methods outperform baseline approaches. Notably, our asynchronous algorithm delivers exceptional efficiency and accuracy across all scenarios, even in networks with tens of millions of nodes.
\end{abstract}

\section{Introduction}
Online social networks have fundamentally transformed the dissemination, evolution, and formation of opinions, serving as a powerful catalyst for accelerating and amplifying modern perspectives~\cite{ledford2020facebook}. Compared to traditional communication methods, they facilitate faster, broader, and more decentralized information exchange, thereby enhancing the universality, criticality, and complexity of information propagation~\cite{notarmuzi2022universality}. Within this intricate interplay between network structure and human behavior, the concept of overall opinion emerges as a key quantitative metric, representing the focal point of public sentiment on contentious issues~\cite{PASI2006390}. This quantified equilibrium of public opinion has been applied in fields such as commercial marketing, political elections, and public health campaigns.

The optimization of overall opinions has garnered significant attention in recent times. Various methods have been explored to optimize collective opinions, including modifying resistance coefficients~\cite{abebe2018opinion,chan2019revisiting,abebe2021opinion}, adjusting expressed opinions~\cite{gionis2013opinion}, and altering network structures~\cite{zhang2020opinion,zhu2021minimizing}.
Meanwhile, research~\cite{klausen2018finding} has highlighted the significant correlation between node topological positions and the evolution of global opinions, revealing that changes in the internal opinions of nodes can have a nonlinear amplification effect on opinion propagation. This provides the possibility of optimizing public opinion at low cost by modifying the internal opinions of key nodes.

In this paper, we address the following optimization problem: given a social network with $n$ nodes and $m$ edges (whether directed or undirected), along with an integer $k$, how can we strategically identify the $k$ nodes and modify their internal opinions to maximize the overall opinion? Existing exact solution methods require a time complexity of $O(n^3)$, rendering them impractical for large-scale networks. We propose two sampling approaches to approximate the solution, but these approaches face the challenge of balancing between extensive sampling requirements and accuracy.
Inspired by the random walk interpretation, we further introduce a asynchronous update algorithm that exactly identifies the optimal set of nodes through asynchronous update operations and progressive refinement. 
We conducted extensive experiments on various real-world networks to evaluate algorithm performance. The experimental results demonstrate that all three proposed algorithms significantly outperform baseline methods in terms of effectiveness. Moreover, our asynchronous algorithm exhibits both high efficiency and exact precision, while maintaining excellent scalability for networks with tens of millions of nodes.

\section{Related Work}
We review the related literature from the following two perspectives, including modeling opinion dynamics and optimization problems in opinion dynamics.

\textbf{Opinion Dynamics Models.}\quad
Opinion dynamics has been the subject of intense recent research to model social learning processes in various disciplines~\cite{jia2015opinion,dong2018survey,anderson2019recent}. These models capture the mechanisms and factors influencing opinion formulation, shedding light on understanding the whole process of opinion shaping and diverse phenomena taking place in social media. In the past decades, numerous relevant models have been proposed~\cite{proskurnikov2017tutorial,hassani2022classical,noorazar2020classical,okawa2022predicting,de2018shaping}. Among various existing models, the DeGroot model~\cite{degroot1974reaching} and the Friedkin-Johnson (FJ) model~\cite{friedkin1990social} are two popular ones. 
After their establishment, the DeGroot model and the FJ model have been extended in a variety of ways~\cite{jia2015opinion,he2020opinion,zhang2020opinion,chitra2020analyzing,abebe2018opinion}, by incorporating different factors affecting opinion dynamics, such as peer pressure~\cite{semonsen2018opinion}, susceptibility to persuasion~\cite{abebe2018opinion,zhang2020opinion}, and opinion leader~\cite{gionis2013opinion}. Under the formalism of these models, some relevant quantities, properties and explanations have been broadly studied, including the equilibrium expressed opinions~\cite{das2013debiasing,bindel2015bad,wang2025efficient}, sufficient condition for the stability~\cite{ravazzi2014ergodic}, the average internal opinion~\cite{das2013debiasing}, interpretations~\cite{ghaderi2014opinion,bindel2015bad}, and so on.

\textbf{Optimization Problems in Opinion Dynamics.}\quad
Recently, several optimization problems related to opinion dynamics have been formulated and studied for different objectives. For example, a long line of work has been devoted to maximizing the overall opinion by using different strategies, such as identifying a fixed number of individuals and setting their expressed opinions to 1~\cite{gionis2013opinion}, changing agent’s initial opinions~\cite{yi2021shifting,zhang2020opinion,sun2023opinion}, as well as modifying individuals’ susceptibility to persuasion~\cite{abebe2018opinion,chan2019revisiting,abebe2021opinion}. \cite{tu2020co} studies the problem of allocating seed users to opposing campaigns with a goal to maximize the expected number of users who are co-exposed to both campaigns. In additon, \cite{garimella2017balancing} studies the problem of balancing the information exposure. These studies have far-reaching implications in product marketing, public health campaigns, and political candidates. Another major and increasingly important focus of research is optimizing some social phenomena, such as maximizing the diversity~\cite{mackin2019maximizing,matakos2020tell}, minimizing conflict~\cite{chen2018quantifying,wang2023relationship}, disagreement~\cite{gaitonde2020adversarial,musco2018minimizing,zhu2021minimizing}, and polarization~\cite{matakos2017measuring,musco2018minimizing,zhu2021minimizing}. 

\section{Preliminaries}
This section is devoted to a brief introduction to some useful notations and tools, in order to facilitate the description of
problem formulation and algorithms.

\textbf{Notations.}\quad
We denote scalars in $\rea$ by normal lowercase letters like $a,b,c$, sets by normal uppercase letters like $A,B,C$, vectors by bold lowercase letters like $\aaa, \bb, \cc$, and matrices by bold uppercase letters like $\AA, \BB, \CC$. We use $\one$ to denote  the vector  of appropriate dimensions with all entries being ones, and use $\ee_i$  to denote the $i^{\rm th}$ standard basis vector of appropriate dimension. Let $\aaa^\top$ and $\AA^\top$  denote, respectively, transpose of  vector $\aaa$ and matrix  $\AA$. We write $A(i,j)$ to denote the entry at row $i$ and column $j$ of $\AA$ and we use $\aaa(i)$ to denote the $i^{\rm th}$ element of vector $\aaa$. Let $\aaa_{\max}$, $\aaa_{\min}$, $\bar{\aaa}$ and $\aaa_{\text{sum}}$ denote the maximum element, the minimum element, the mean of the elements in vector $\aaa$ and the sum of all elements in vector $\aaa$, respectively.

\textbf{Graph and Related Matrices.}\quad
Let $\calG = (V, E)$ denote an directed graph with $n = |V|$ nodes and $m = |E|$ edges. The existence of $(v_i, v_j) \in E$ means that there is an edge from $v_i$ to $v_j$. In what follows, $v_i$ and $i$ are used interchangeably to represent node $v_i$, when it is clear from the context. For a node $v\in V$, the in-neighbors of $v$ are given by $N_{\text{in}}(v) = \{u|(u,v)\in E\}$, and the out-neighbors of $v$ are given by $N_{\text{out}}(v) = \{u|(v,u)\in E\}$.
The connections of graph $\calG = (V, E)$ are encoded in its adjacency matrix $\AA = (a_{i,j})_{n \times n}$, with the element $a_{i,j}$ being $1$ if $(v_i, v_j) \in E$ and $0$ otherwise. For a node $i \in V$, its out-degree $d^+_i$ is defined as $d^+_i = \sum_{j=1}^n a_{i,j}$, and its in-degree $d^-_v$ is defined as $d^-_i=\sum_{j=1}^n a_{j,i}$. The diagonal degree matrix of $\calG$ is defined as $\DD = \text{diag}(d^+_1, d^+_2, \dots, d^+_n)$. We define $\LL = \DD - \AA$ and $\PP = \DD^{-1} \AA$ as the Laplacian matrix and the transition matrix of graph $\calG$.

\textbf{Opinion Dynamic Model.}\quad
In this work, we adopt an opinion formation model introduced by the work of DeGroot~\cite{degroot1974reaching} and Friedkin and Johnsen~\cite{friedkin1999social}, which has been used in~\cite{abebe2018opinion,abebe2021opinion,das2013debiasing}. In this model, each agent $i$ is endowed with an internal opinion $s_i$ in $[0,1]$, where 0 and 1 are polar opposites of opinions about a certain topic. Each agent also has a parameter that represents the susceptibility to persuasion, which we call the resistance coefficient $\alpha_i\in (0, 1]$. The internal opinion $s_i$ reflects the intrinsic position of the agent $i$ on a certain topic. A higher value on the resistance coefficient $\alpha_i$ means that the agent is less willing to conform to the opinions of the neighbors in the social network. According to the opinion dynamics model, the final opinion of each agent $i$ is a function of the social network, the set of internal opinions, and the resistance coefficients, determined by computing the equilibrium state of a dynamic opinion updating process. 
The social network is represented as a graph where edges capture influence relationships, with an edge from $i$ to $j$ indicating that agent $i$ is influenced by the expressed opinion of agent $j$.
During the process of opinion evolution, the internal opinion $s_i$ remains constant, while the expressed opinion $z_i^{t}$ evolves at time $t + 1$ as follows:
\begin{equation*}
    z^{t+1}_i=\alpha_is_i+(1-\alpha_i)\cdot\frac{\sum_{j\in N_{\text{out}}(i)}z^{t}_j}{d^+_i},
\end{equation*}
which can also be expressed in matrix form as $\zz^{t+1}=\RR\sss+(\II-\RR)\PP\zz^t$. This dynamic converges to a unique equilibrium if $\alpha_i>0$ for all $i\in V$ [14]. The equilibrium opinion vector $\zz$ is the solution to a linear system of equations:
\begin{equation}\label{equ:1}
    \zz=(\II-(\II-\RR)\PP)^{-1}\RR\sss,
\end{equation}
where $\RR = \Diag(\alpha)$ is a diagonal matrix called resistance matrix and entry $\RR(i,i)$ corresponds to $\alpha_i$. We call $\zz(i)$ the expressed opinion of agent $i$. Let $\MM = (\II-(\II-\RR)\PP)^{-1}\RR$, we have $\zz = \MM\sss$. Note that $\MM$ is a row-stochastic matrix such that $\MM \one = \one$.

\section{Problem Formulation}
An important quantity for opinion dynamics is the overall expressed opinion or the average expressed opinion at equilibrium, the optimization problem for which has been addressed under different constraints~\cite{gionis2013opinion, ahmadinejad2015forming, abebe2018opinion, abebe2021opinion, zhang2020opinion, yi2021shifting, sun2023opinion}. In this section, we propose a problem of maximizing overall expressed opinion in a graph, and introduce an exact algorithm optimally solving the problem. 

\textbf{Overall Opinion and Structural Centrality.}\quad
For the opinion dynamic model in graph $\calG = (V, E)$, the overall expressed opinion is defined as the sum $\zz_\tsum$ of expressed opinions $z_i$ of every node $i\in V$ at equilibrium. By Eq.~\eqref{equ:1}, $\zz_\tsum = \one^\top\MM \sss$. Given the internal opinion vector $\sss$ and the resistance matrix $\RR$, we use $f(\RR, \sss)$ to denote the overall expressed opinion. By definition, 
\begin{equation}\label{equ:2}
    f(\RR,\sss) = \one^\top\zz =  \one^\top \MM \sss = \sum_{u \in V}\sum_{v \in V} \MM(u,v)\sss(v).
\end{equation}
Eq.~\eqref{equ:2} tells us that the overall expressed opinion $ f(\RR,\sss)$ is determined by three factors: the internal opinion and the resistance coefficient of every node, as well as the network structure characterizing interactions between nodes, all of which constitute the social structure of the opinion system. The first two are intrinsic property of each node, while the last one is a structure property of the network, both of which together determine the opinion dynamics system. Concretely, for the equilibrium expressed opinion $\zz_u= \sum_{v\in V} \MM(u,v)\sss(v)$ of node $u$, $\MM(u,v)$ indicates the convex combination coefficient or contribution of the internal opinion for node $v$. And the average value of the $v$-th column elements of $\MM$, denoted by $\rho_{v}\triangleq \sum_{u\in V}\MM(u,v)$, measures the contribution of the internal opinion of node $v$ to $f(\RR,\sss)$. We call $\rho_v$ as the structural centrality~\cite{friedkin2011formal} of node $v$ in the opinion dynamics model, since it catches the long-run structure influence of node $v$ on the overall expressed opinion. Note that matrix $\MM$ is row stochastic and $0 \le \MM(u,v)\le 1$ for any pair of nodes $u$ and $v$, $0 \le \rho_v \le n$ holds for every node $v \in V$ , and $\sum_{v\in V}\rho_v = n$. 

Using structural centrality, the overall expressed opinion $f(\RR,\sss)$ is expressed as $f(\RR,\sss) = \sum_{v \in V} \rho_v \sss(v)$, which shows that the overall expressed opinion $f(\RR,\sss)$ is a convex combination of the internal opinions of all nodes, with the weight for $\sss_v$ being the structural centrality $\rho_v$ of node $v$.

\textbf{Problem Statement.}\quad
As shown above, for a given graph $\calG = (V, E)$, its node centrality remains fixed. For the FJ opinion dynamics model on $\calG = (V, E)$ with internal opinion vector $\sss$ and resistance matrix $\RR$, if we choose a set $T \subseteq V$ of $k$ nodes and persuade them to change their internal opinions to $1$, the overall equilibrium opinion, denoted by $f_T(\RR,\sss)$, will increase. It is clear that for $T = \emptyset$, $f_\emptyset(\RR,\sss) = f(\RR,\sss)$. Moreover, for two node sets $H$ and $T$, if $H \subseteq T \subseteq V$ , then $f_T(\RR,\sss) \ge f_H(\RR,\sss)$. Then the problem \textsc{OpinionMax} of opinion maximization arises naturally: How to optimally select a set $T$ with a fixed number of $k$ nodes and change their internal opinions to $1$, so that their influence on the overall equilibrium opinion is maximized. Let the vector $\DDelta$ be the potential influence vector, where $\DDelta(i) = \rho_i(1-\sss(i))$ defines the potential influence of node $i$ on the growth of the overall equilibrium opinion. Mathematically, it is formally stated as follows.
\begin{tcolorbox}[colframe=black, colback=white,]
\begin{problem}{(\textsc{OpinionMax})}\label{prob:1}
    \textit{
    Given an unweighted graph $\calG=(V,E)$, an internal opinion vector $\sss$, a resistance matrix $\RR$, and an integer parameter $k \ll n$, we aim to find the set $T \subset V$ with $|T| = k$ nodes, and change the internal opinions of these chosen $k$ nodes to $1$, so that the overall equilibrium opinion is maximized. That is,
    \begin{equation}
        T = \arg{\max_{U\subset V, |U| = k}{f_U(\RR, \sss)}} = \arg{\max_{U\subset V, |U| = k} \sum_{i\in U} {\DDelta}(i)}.
    \end{equation}
    }
\end{problem}
\end{tcolorbox}
Similarly, we can define the problem \textsc{OpinionMin} for minimizing the overall equilibrium opinion by optimally selecting a set $T$ of $k$ nodes and changing their internal opinions to $0$. The goal of problem \textsc{OpinionMin} is to drive the overall equilibrium opinion $f_T(\RR,\sss)$ towards the polar value $0$, while the goal of problem \textsc{OpinionMax} is to drive $f_T(\RR,\sss)$ towards polar value $1$. Although the definitions and formulations of problems \textsc{OpinionMax} and \textsc{OpinionMin} are different, we can prove that they are equivalent to each other. In the sequel, we only consider the \textsc{OpinionMax} problem in this paper.

\textbf{Optimal Solution.}\quad
The most naive and straightforward method for solving Problem~\ref{prob:1} involves directly computing $\zz$ by inverting the matrix $\II-(\II-\RR)\PP$, which has a complexity of $O(n^3)$. Identifying the top $k$ elements using a max-heap has a complexity of $O(n \log k)$. Therefore, the overall time complexity of the algorithm involving matrix inversion is $O(n^3)$. 
This impractical time complexity makes it infeasible for networks with only tens of thousands of nodes on a single machine. 
In the following sections, We propose a new interpretation and attempt to propose a new precise algorithm based on this explanation.

\section{Sampling Methods}

In this section, apart from the algebraic definition, we give two novel interpretations and propose corresponding sampling algorithms to approximately solve Problem~\ref{prob:1}. 

\textbf{Random Walk-Based Algorithm.}\quad Observing that the expression for the overall equilibrium opinion in Eq.~\eqref{equ:2} can be expanded as $\one^\top \sum_{i = 0}^\infty ((\II-\RR)\PP)^i \RR\sss$ via the Neumann series, we introduce the \textit{absorbing random walk}. For a absorbing random walk starting from node $s$, at each step where the current node is $j$, the walk either (i) is absorbed by node $j$ with probability $\alpha_j$, or (ii) moves uniformly at random to a neighboring node with probability $1-\alpha_j$, where the resistance coefficient $\alpha_j$ of node $j$ is represented as the absorption probability of the random walk at node $j$.
\begin{lemma}\label{lem:1}
    For an unweighted graph $\calG=(V,E)$, let $\pp_i \in \rea^{|V|}$ be the absorption probability vector of absorbing random walks starting at node $i \in V$. The structural centrality of node $v$ is $\rho_v = \sum_{i \in V} \pp_i(v)$.
\end{lemma}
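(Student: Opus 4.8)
The plan is to connect the absorption probability vector $\pp_i$ to the matrix $\MM$, and then sum over the starting node $i$ to recover $\rho_v$. First I would establish that $\pp_i(v) = \MM(i,v)$, i.e., the probability that an absorbing random walk started at $i$ is ultimately absorbed at $v$ equals the $(i,v)$ entry of $\MM = (\II-(\II-\RR)\PP)^{-1}\RR$. To see this, condition on the first step of the walk from $i$: with probability $\alpha_i$ it is absorbed immediately at $i$ (contributing $\alpha_i$ to $\pp_i(i)$), and with probability $(1-\alpha_i)$ it moves to a uniformly random out-neighbor $j$, after which the absorption probabilities are governed by $\pp_j$. In matrix form, stacking the $\pp_i$ as rows of a matrix $\boldsymbol{P}$, this recursion reads $\boldsymbol{P} = \RR + (\II-\RR)\PP\,\boldsymbol{P}$, whose unique solution (using $\alpha_i > 0$ for all $i$, so that $(\II-\RR)\PP$ is substochastic with spectral radius $< 1$ and $\II - (\II-\RR)\PP$ is invertible) is exactly $\boldsymbol{P} = (\II-(\II-\RR)\PP)^{-1}\RR = \MM$. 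Equivalently, one can expand via the Neumann series already used in the paragraph preceding the lemma: $\MM = \sum_{t=0}^\infty ((\II-\RR)\PP)^t \RR$, and the $(i,v)$ entry of the $t$-th term is precisely the probability of being absorbed at $v$ at step $t$ when starting from $i$, so summing over $t$ gives the total absorption probability.

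Once $\pp_i(v) = \MM(i,v)$ is in hand, the conclusion is immediate from the definition of structural centrality: $\rho_v = \sum_{u \in V} \MM(u,v) = \sum_{i \in V} \pp_i(v)$, which is the claimed identity. I would also note in passing the sanity check that $\sum_v \pp_i(v) = 1$ (every walk is eventually absorbed, consistent with $\MM\one = \one$) and $\sum_v \rho_v = \sum_{i,v} \pp_i(v) = n$, matching the property stated earlier in the paper.

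The main obstacle is the justification that the vector recursion has a \emph{unique} solution equal to $\MM$, rather than merely that $\MM$ satisfies it — equivalently, that the absorbing random walk terminates almost surely so that $\pp_i$ is well-defined as a genuine probability vector. This is where the hypothesis $\alpha_i > 0$ for all $i$ is essential: it guarantees a uniform positive lower bound $\alpha_{\min} > 0$ on the per-step absorption probability, so the walk is absorbed within $t$ steps with probability at least $1 - (1-\alpha_{\min})^t \to 1$, and simultaneously forces $\|((\II-\RR)\PP)^t\|_\infty \le (1-\alpha_{\min})^t \to 0$, giving both convergence of the Neumann series and invertibility of $\II - (\II-\RR)\PP$. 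With that uniqueness pinned down, identifying the probabilistic solution with the algebraic one is routine.
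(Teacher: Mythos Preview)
Your proposal is correct and matches the paper's approach: the paper also expands $\MM$ via the Neumann series $\sum_{t\ge 0}((\II-\RR)\PP)^t\RR$, identifies $\pp_i$ with $\ee_i^\top\MM$, and then sums over $i$ to obtain $\rho_v=\sum_{i\in V}\pp_i(v)$. Your first-step conditioning argument and your discussion of uniqueness/convergence via $\alpha_{\min}>0$ are more detailed than the paper's treatment, but they lead to the same identification and the same conclusion.
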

Leveraging this connection between structural centrality and termination probabilities, we propose a random walk-based algorithm \textsc{RWB} to efficiently estimate structural centrality for all nodes and compute an approximate solution to Problem~\ref{prob:1}. 
In algorithm \textsc{RWB}, we first simulate $N$ runs of the absorbing random walk $\{X_i\}_{i \geq 0}$, where each realization starts from a node uniformly chosen from $V$. We then estimate the structural centrality $\rho_j$ for each node $j \in V$ by scaling the empirical absorption frequency at $j$ by $n$. 
\begin{lemma}\label{lem:2}
    Let $\calG = (V,E)$ be an unweighted graph with internal opinion vector $\sss$, and resistance matrix $\RR$. For any error tolerance $\epsilon \in (0,1)$, if algorithm \textsc{RWB} simulates $N = O\left(\frac{n}{\epsilon^2} \log n\right)$ independent random walks, then the estimated structural centrality $\hat{\rho}_i$ of any node $i \in V$ satisfies $\ProP\left(|\hat{\rho}_i - \rho_i| \geq \epsilon\right) \leq \frac{1}{n}.$
\end{lemma}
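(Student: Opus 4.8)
The plan is to view $\hat\rho_i$ as a rescaled empirical mean of i.i.d.\ Bernoulli trials and apply a multiplicative concentration inequality. Fix a node $i$. For the $\ell$-th simulated walk let $Y_\ell = \one[\text{walk } \ell \text{ is absorbed at } i]$. Since each walk starts from a node drawn uniformly from $V$, conditioning on the start node $j$ and using the definition of $\pp_j$ gives $\ProP(Y_\ell = 1) = \frac1n\sum_{j\in V}\pp_j(i)$, which by Lemma~\ref{lem:1} equals $p_i := \rho_i/n$. The $Y_\ell$ are independent and identically distributed, the estimator is $\hat\rho_i = \frac{n}{N}\sum_{\ell=1}^N Y_\ell$, and hence $\Expect[\hat\rho_i] = \rho_i$, i.e. it is unbiased; this is the step where Lemma~\ref{lem:1} is used.

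Next I would translate the target event into a deviation bound for the Binomial sum $S := \sum_{\ell=1}^N Y_\ell \sim \mathrm{Binomial}(N,p_i)$, whose mean is $\mu := Np_i = N\rho_i/n$: the event $|\hat\rho_i - \rho_i|\ge\epsilon$ is exactly $|S-\mu|\ge N\epsilon/n$. The crucial point is that a crude Hoeffding bound on the bounded variables $Y_\ell$ would only yield $N = O(n^2\epsilon^{-2}\log n)$; to obtain the claimed $O(n\epsilon^{-2}\log n)$ I would instead invoke the multiplicative Chernoff bound (equivalently, a Bernstein bound), which controls the deviation through the small mean $\mu$ rather than the range of $Y_\ell$. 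Writing the additive slack $N\epsilon/n$ as a multiplicative slack $\delta\mu$ with $\delta=\epsilon/\rho_i$, the two-sided Chernoff bound gives $\ProP(|S-\mu|\ge N\epsilon/n)\le 2\exp\!\bigl(-cN\epsilon^2/(n\rho_i)\bigr)$ for an absolute constant $c$ when $\epsilon\le\rho_i$; the regime $\epsilon>\rho_i$ is disposed of directly, since $\hat\rho_i\ge 0$ already forces $\hat\rho_i-\rho_i>-\epsilon$ and the upper tail obeys an even stronger estimate.

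Finally I would solve for $N$: the inequality $2\exp(-cN\epsilon^2/(n\rho_i))\le 1/n$ holds as soon as $N\ge C\,n\rho_i\epsilon^{-2}\log n$ for a suitable constant $C$. The main obstacle is the factor $\rho_i$ in this bound: in principle $\rho_i$ can be as large as $\Theta(n)$, so one must argue it is benign. I would do this by noting $\sum_{v\in V}\rho_v = n$, so only an $O(1/c)$ fraction of nodes have $\rho_v\ge c$; treating $\rho_i = O(1)$ as the generic case (and absorbing any bounded constant into the $O(\cdot)$) recovers exactly $N = O(n\epsilon^{-2}\log n)$. The leftover work — pinning down the explicit constants in the Chernoff bound and verifying the $\epsilon>\rho_i$ corner case — is routine.
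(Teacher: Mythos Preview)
Your setup is correct: writing $\hat\rho_i$ as a scaled Binomial and reducing to concentration of $S=\sum_\ell Y_\ell$ around $\mu=N\rho_i/n$ is the right move, and the multiplicative Chernoff bound $\Pr(|S-\mu|\ge N\epsilon/n)\le 2\exp(-cN\epsilon^2/(n\rho_i))$ is valid in the regime $\epsilon\le\rho_i$. The gap is in your final paragraph. The lemma asserts the bound for \emph{every} node $i\in V$, so you must control the tail for each fixed $i$, including those with large $\rho_i$. The observation $\sum_v\rho_v=n$ only tells you that \emph{most} nodes have $\rho_v=O(1)$; it says nothing about the particular node you are currently bounding. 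The constraint $\sum_v\rho_v=n$ permits a single node with $\rho_v=\Theta(n)$, and for that node your sample-size requirement becomes $N=\Theta(n^2\epsilon^{-2}\log n)$. You cannot ``absorb a bounded constant into the $O(\cdot)$'' when the constant in question scales with $n$, and declaring $\rho_i=O(1)$ as the ``generic case'' is not a proof of the stated per-node guarantee.

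The paper takes a different route: rather than multiplicative Chernoff on the $\{0,1\}$ indicators, it works with the scaled variables $X_\ell=nY_\ell\in\{0,n\}$ and applies a Bernstein-type inequality in which the relevant variance term is $\Vari[\hat\rho_i]=\rho_i(n-\rho_i)/N$, bounded uniformly by $n^2/(4N)$ regardless of $i$. It is this uniform-in-$i$ variance bound --- not an averaging argument over nodes --- that eliminates the $\rho_i$ dependence in the paper's calculation. If you want to repair your approach, you would need to replace the multiplicative Chernoff step (whose exponent scales like $\delta^2\mu$ and hence like $\epsilon^2/(\rho_i)$) with a variance-aware bound that does not carry the $\rho_i$ factor.
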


\begin{theorem}\label{the:1}
Consider a graph $\calG = (V,E)$  with internal opinion vector $\sss$ and resistance matrix $\RR$. Let $\alpha_{\min} = \min_{i\in V} R(i,i)$ and $\alpha_{\max} = \max_{i\in V} R(i,i)$. Under the error guarantee of Lemma~\ref{lem:2}, algorithm \textsc{RWB} achieves a time complexity of $O(\frac{\alpha_{\max}(1-\alpha_{\min})}{\epsilon^2 \alpha_{\min}^2} \cdot n\log n)$.
\end{theorem}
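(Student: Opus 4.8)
The plan is to bound the total work of algorithm \textsc{RWB} by multiplying the number of walks $N$ from Lemma~\ref{lem:2} by the expected cost of a single absorbing random walk, and then to show that this expected cost is $O\!\left(\frac{\alpha_{\max}(1-\alpha_{\min})}{\alpha_{\min}^2}\right)$. First I would fix the sample size $N = O\!\left(\frac{n}{\epsilon^2}\log n\right)$ given by Lemma~\ref{lem:2}, so that the per-node error guarantee holds. It then remains only to account for the cost of simulating one realization of the walk $\{X_i\}_{i\ge 0}$: each step consists of sampling whether absorption occurs (an $O(1)$ Bernoulli draw) and, if not, sampling a uniform out-neighbor, which is also $O(1)$ with appropriate adjacency-list data structures. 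Hence the cost of a single walk is, up to a constant, its length, i.e. the number of steps before absorption.

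Next I would bound the expected walk length. The key observation is that at every node $j$ the absorption probability is $\alpha_j \ge \alpha_{\min}$, so the walk length is stochastically dominated by a geometric random variable with success probability $\alpha_{\min}$; therefore the expected length of one walk is at most $1/\alpha_{\min}$. Multiplying by $N$ gives a crude bound of $O\!\left(\frac{n\log n}{\epsilon^2\,\alpha_{\min}}\right)$. To recover the stated bound with the extra factor $\frac{\alpha_{\max}(1-\alpha_{\min})}{\alpha_{\min}}$, I expect the intended argument is tighter: one should charge the work not per step but per \emph{move} (non-absorption transition), or equivalently observe that the probability the walk takes at least one step before being absorbed is at most $1-\alpha_{\min}$, and then the conditional expected number of further moves contributes the remaining geometric factor; an additional $\alpha_{\max}$-dependent factor would enter if each visited node's contribution to the centrality estimator is weighted, or if the number of starting nodes that actually produce a nontrivial walk is controlled via $\alpha_{\max}$. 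I would reconcile the precise combinatorial accounting with the statement here, writing the expected cost per walk as $\frac{\alpha_{\max}(1-\alpha_{\min})}{\alpha_{\min}^2}$ and justifying each of the three factors ($\alpha_{\max}$ from the weighting/absorption at the terminal node, $1-\alpha_{\min}$ from the probability of moving at all, $\alpha_{\min}^{-2}$ from the geometric tail of the remaining trajectory length).

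Finally, combining the two pieces — $N = O\!\left(\frac{n}{\epsilon^2}\log n\right)$ walks, each costing $O\!\left(\frac{\alpha_{\max}(1-\alpha_{\min})}{\alpha_{\min}^2}\right)$ in expectation — yields total expected time $O\!\left(\frac{\alpha_{\max}(1-\alpha_{\min})}{\epsilon^2\,\alpha_{\min}^2}\cdot n\log n\right)$, and the final top-$k$ selection via a max-heap adds only $O(n\log k)$, which is dominated. The main obstacle I anticipate is precisely the per-walk cost accounting: getting the exact exponent $\alpha_{\min}^{-2}$ and the $\alpha_{\max}(1-\alpha_{\min})$ prefactor requires being careful about what ``one step of work'' is charged against — whether it is absorption events, move events, or weighted visits — so I would state that bookkeeping convention explicitly at the outset and then the geometric-domination estimate and the union over the $N$ walks become routine.
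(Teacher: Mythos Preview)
Your overall architecture is exactly the paper's: fix $N = O(n\epsilon^{-2}\log n)$ from Lemma~\ref{lem:2} and multiply by the expected length of one absorbing walk. Where you get tangled is the per-walk cost. You correctly derive the geometric-domination estimate $\Expect[\text{length}]\le 1/\alpha_{\min}$ and then speculate about extra bookkeeping (charging per move, weighting by the terminal absorption, etc.) to recover the stated factor $\frac{\alpha_{\max}(1-\alpha_{\min})}{\alpha_{\min}^2}$. That speculation is unnecessary and is the only gap. The paper's computation is simply the direct termwise bound
\[
\Expect[\text{length}] \;=\; \sum_{l\ge 0} l\,\Pr(\text{length}=l)
\;=\; \sum_{l\ge 0} l\,\ee_u^\top\bigl((\II-\RR)\PP\bigr)^l\RR\,\one
\;\le\; \sum_{l\ge 0} l\,(1-\alpha_{\min})^l\,\alpha_{\max}
\;=\; \frac{\alpha_{\max}(1-\alpha_{\min})}{\alpha_{\min}^2},
\]
using $\sum_{l\ge 1} l\,x^{l-1}=(1-x)^{-2}$. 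The $\alpha_{\max}$ appears only because the absorption factor $\RR\one$ is bounded entrywise by $\alpha_{\max}$, and the survival probability after $l$ steps is at most $(1-\alpha_{\min})^l$; there is no per-move accounting or estimator-weighting subtlety behind it. In fact your ``crude'' bound $1/\alpha_{\min}$ is also a valid upper bound and is tighter whenever $\alpha_{\min}\le 1/2$ (since $\alpha_{\max}\le 1$); the theorem's stated form is just what falls out of this particular termwise estimate, not a sharper target you were failing to hit.
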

We now establish that algorithm \textsc{RWB} provides provable approximation guarantees for \textsc{OpinionMax}. While Lemma~\ref{lem:2} bounds the error of individual $\hat{\rho}_i$ estimates, the following result demonstrates that the \emph{collective quality} of the selected set $\hat{T}$ is near-optimal:

\begin{corollary}\label{cor:1}
    Let $\hat{\rho}_i$ be the estimator of $\rho_i$ from algorithm \textsc{RWB} with absolute error parameter $\epsilon$, and $T^*$ be the optimal solution to Problem~\ref{prob:1}. For the set $\hat{T}$ consisting of the $k$ nodes achieving $\arg \max_{|T|=k}\sum_{i\in T}\hat{\rho}_i(1-s_i)$, we have:
    $
    \sum_{i \in \hat{T}} \rho_i(1-s_i) \geq \sum_{i \in T^*} \rho_i(1-s_i) - 2k\epsilon.
    $
\end{corollary}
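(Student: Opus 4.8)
The plan is to condition on the high-probability event furnished by Lemma~\ref{lem:2} that \emph{all} structural-centrality estimates are simultaneously accurate, and then run a standard ``estimated-objective swap'' argument. First I would set up the good event $\mathcal{E} = \{\,|\hat\rho_i - \rho_i| \le \epsilon \text{ for all } i \in V\,\}$. Lemma~\ref{lem:2} gives $\ProP(|\hat\rho_i - \rho_i| \ge \epsilon) \le 1/n$ per node; inflating the constant hidden in $N = O(\frac{n}{\epsilon^2}\log n)$ so that the per-node failure probability drops below $1/n^2$, a union bound over the $n$ nodes yields $\ProP(\mathcal{E}) \ge 1 - 1/n$. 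All subsequent reasoning takes place on $\mathcal{E}$.

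Next I would transfer the per-node bound from the centralities to the potential-influence values $\DDelta(i) = \rho_i(1-\sss(i))$ and their estimates $\widehat{\DDelta}(i) := \hat\rho_i(1-\sss(i))$. Since $\sss(i) \in [0,1]$ implies $0 \le 1-\sss(i) \le 1$, multiplying by $(1-\sss(i))$ does not amplify error, so $|\widehat{\DDelta}(i) - \DDelta(i)| \le \epsilon$ on $\mathcal{E}$. Summing over any $k$-element set gives $\bigl|\sum_{i\in T}\widehat{\DDelta}(i) - \sum_{i\in T}\DDelta(i)\bigr| \le k\epsilon$ for every $T$ with $|T| = k$; in particular this holds for $T = \hat T$ and for $T = T^*$.

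Then I would chain three inequalities. By the definition of $\hat T$ as a maximizer of the estimated objective over all $k$-subsets, $\sum_{i\in\hat T}\widehat{\DDelta}(i) \ge \sum_{i\in T^*}\widehat{\DDelta}(i)$. Combining this with the two transfer bounds:
\[
\sum_{i\in\hat T}\DDelta(i) \;\ge\; \sum_{i\in\hat T}\widehat{\DDelta}(i) - k\epsilon \;\ge\; \sum_{i\in T^*}\widehat{\DDelta}(i) - k\epsilon \;\ge\; \sum_{i\in T^*}\DDelta(i) - 2k\epsilon,
\]
which is exactly the claimed bound. The factor $2$ is precisely the price of passing from the true objective to the estimated objective on $\hat T$ and back from the estimated to the true objective on $T^*$.

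The inequality chain itself is a two-line computation, so the only real subtlety is the probabilistic bookkeeping: Lemma~\ref{lem:2} as literally stated gives only a $1/n$ per-node failure probability, which is too weak to union-bound over all $n$ nodes, so I would either state the corollary as holding on the event $\mathcal{E}$ (noting that $N$ can be scaled up by a constant factor to make $\ProP(\mathcal{E}) \ge 1 - 1/n$), or restrict attention to the $2k$ nodes in $\hat T \cup T^*$ — the latter being awkward because $\hat T$ is itself random and chosen after the estimates. Pinning down in exactly what sense the deterministic-looking inequality holds, and tightening the sample complexity accordingly, is the part I expect to require the most care; everything else is routine.
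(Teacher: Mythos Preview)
Your proposal is correct and the core argument—bounding $|\hat U(T)-U(T)|\le k\epsilon$ via $0\le 1-s_i\le 1$, then chaining $U(\hat T)\ge \hat U(\hat T)-k\epsilon\ge \hat U(T^*)-k\epsilon\ge U(T^*)-2k\epsilon$—is exactly what the paper does. The only difference is that the paper's proof simply takes the per-node bound $|\hat\rho_i-\rho_i|\le\epsilon$ as a standing hypothesis and proves the inequality deterministically, without any of the union-bound discussion you flag as the subtle part; your extra probabilistic bookkeeping is more careful than the paper itself, but the underlying swap argument is identical.
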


\textbf{Forest Sampling Algorithm. }\quad For matrix $\QQ + \LL$, where $\QQ$ is a diagonal matrix, the elements of its inverse can be interpreted combinatorially in terms of spanning converging forests in a graph~\cite{xu2022effects, chebotarev2006matrix}. In particular, by setting $\QQ = \DD(\II-\RR)^{-1}\RR$, we can reformulate Eq.~\eqref{equ:2} in terms of the fundamental matrix, thereby establishing a direct correspondence between the structural centrality and spanning converging forests.
\begin{lemma}\label{lem:forest}
    Let $\mathcal{F}$ be the set of all spanning converging forests of graph $\calG$, and let $\mathcal{F}^{ij}\subseteq \mathcal{F}$ be the subset where nodes $i$ and $j$ are in the same converging tree rooted at node $i$. For a forest $F\in\mathcal{F}$, let $r(F)$ be the set of roots of $F$. Then the structural centrality of node $i$ is $\rho_i = \frac{\sum_{j\in V}\sum_{F\in \mathcal{F}^{ij}}\prod_{u \in r(F)}Q(u,u)}{\sum_{F \in \mathcal{F}}\prod_{u\in r(F)}Q(u,u)}$.
\end{lemma}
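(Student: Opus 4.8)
The plan is to rewrite the algebraic definition of $\MM$ as a matrix-forest identity for $(\QQ+\LL)^{-1}$ and then sum over one index. First I would substitute $\PP=\DD^{-1}\AA$ and left-multiply $\II-(\II-\RR)\PP=\II-(\II-\RR)\DD^{-1}\AA$ by $\DD(\II-\RR)^{-1}$; using $\DD(\II-\RR)^{-1}=\DD+\DD(\II-\RR)^{-1}\RR=\DD+\QQ$, this collapses to
\begin{equation*}
\DD(\II-\RR)^{-1}\bigl(\II-(\II-\RR)\DD^{-1}\AA\bigr)=\DD(\II-\RR)^{-1}-\AA=(\DD+\QQ)-\AA=\QQ+\LL ,
\end{equation*}
so $\II-(\II-\RR)\PP=(\II-\RR)\DD^{-1}(\QQ+\LL)$. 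The matrix $\QQ+\LL$ is strictly diagonally dominant (row $u$ has diagonal $Q(u,u)+d^+_u$ against off-diagonal mass $d^+_u$, and $Q(u,u)=d^+_u\alpha_u/(1-\alpha_u)>0$), hence invertible, and inverting and simplifying yields $\MM=(\QQ+\LL)^{-1}\DD(\II-\RR)^{-1}\RR=(\QQ+\LL)^{-1}\QQ$. Since $\QQ$ is diagonal this gives $\MM(j,i)=Q(i,i)\,(\QQ+\LL)^{-1}(j,i)$, so that $\rho_i=\sum_{j\in V}\MM(j,i)=Q(i,i)\sum_{j\in V}(\QQ+\LL)^{-1}(j,i)$.

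Next I would read off $(\QQ+\LL)^{-1}$ combinatorially. Form the augmented digraph $\calG^{+}$ on $V\cup\{\star\}$ by keeping every edge of $\calG$ with weight $1$ and adding, for each $u\in V$, an edge $u\to\star$ of weight $Q(u,u)$; then $\QQ+\LL$ is exactly the out-degree Laplacian of $\calG^{+}$ with the row and column of $\star$ deleted. The directed matrix-tree theorem then gives $\det(\QQ+\LL)=\sum_{F\in\mathcal{F}}\prod_{u\in r(F)}Q(u,u)$, through the bijection that sends a spanning in-arborescence of $\calG^{+}$ rooted at $\star$ to the spanning converging forest of $\calG$ obtained by deleting $\star$, in which the nodes whose arborescence edge pointed to $\star$ are precisely the roots. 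For the entry itself I would apply the all-minors matrix-tree theorem to the cofactor representing $(\QQ+\LL)^{-1}(j,i)$: deleting row $i$ and column $j$ corresponds, after the sign absorbs $(-1)^{i+j}$, to the spanning forests of $\calG^{+}$ with two trees, one rooted at $\star$ and one rooted at $i$ and containing $j$, which restrict to $\calG$ as exactly the forests $F\in\mathcal{F}^{ij}$ with weight $\prod_{u\in r(F)\setminus\{i\}}Q(u,u)$. Multiplying by $Q(i,i)$ reinstates the missing root factor, so
\begin{equation*}
\MM(j,i)=Q(i,i)\,(\QQ+\LL)^{-1}(j,i)=\frac{\sum_{F\in\mathcal{F}^{ij}}\prod_{u\in r(F)}Q(u,u)}{\sum_{F\in\mathcal{F}}\prod_{u\in r(F)}Q(u,u)} ;
\end{equation*}
alternatively this is precisely the matrix-forest theorem of \cite{chebotarev2006matrix,xu2022effects} and can simply be quoted in this form.

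Summing the last display over $j\in V$ and factoring out the common denominator gives
\begin{equation*}
\rho_i=\sum_{j\in V}\MM(j,i)=\frac{\sum_{j\in V}\sum_{F\in\mathcal{F}^{ij}}\prod_{u\in r(F)}Q(u,u)}{\sum_{F\in\mathcal{F}}\prod_{u\in r(F)}Q(u,u)},
\end{equation*}
which is the assertion. As a sanity check, summing the same identity over $i$ rather than $j$, and using that each node lies in a unique tree of every forest, recovers $\sum_i\MM(j,i)=1$, i.e. the row-stochasticity of $\MM$ noted in the preliminaries.

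I expect the main obstacle to be purely bookkeeping of orientation conventions: keeping the forests \emph{converging} (in-trees, aligned with the direction of $\PP$), taking $i$ rather than $j$ as the root, and attaching the root weight $Q(i,i)$ to the $\QQ$ factor in $\MM=(\QQ+\LL)^{-1}\QQ$ instead of inside $(\QQ+\LL)^{-1}$ --- a stray transpose anywhere silently falsifies the formula --- together with verifying that the sign produced by the all-minors matrix-tree theorem cancels the cofactor sign $(-1)^{i+j}$, which is the most tedious point if that theorem is reproved rather than cited. A minor caveat is that $Q(u,u)=d^+_u\alpha_u/(1-\alpha_u)$ requires $\alpha_u<1$; the case $\alpha_u=1$ is recovered by letting $Q(u,u)\to\infty$, which forces $u$ into the root set of every forest with nonzero weight, matching the fact that a resistance-$1$ node is always self-absorbing.
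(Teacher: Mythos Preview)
Your proof is correct and shares the same skeleton as the paper's: first rewrite $\MM=(\QQ+\LL)^{-1}\QQ$, then give a combinatorial reading of the entries of $(\QQ+\LL)^{-1}$, and finally sum over $j$. The difference lies entirely in how the second step is carried out. The paper invokes a lemma of Chebotarev stating that $\det(\LL_{-S})=\sum_{F\in\mathcal{F}_S}w(F)$ and that the $(i,j)$-cofactor of $\LL_{-S}$ enumerates the forests in $\mathcal{F}^{ij}_{S\cup\{i\}}$; it then expands $\det(\QQ+\LL)$ and its cofactors by multilinearity in the diagonal entries $Q(u,u)$, summing over all subsets $S\subseteq V$ of roots. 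Your route instead adjoins a super-sink $\star$ with edges $u\to\star$ of weight $Q(u,u)$, recognises $\QQ+\LL$ as the principal minor of the out-Laplacian of $\calG^{+}$, and reads off both the determinant and the individual inverse entries from the directed matrix-tree theorem and its all-minors extension. Both arguments are standard derivations of the matrix-forest identity; yours is arguably more geometric and avoids the explicit subset sum, while the paper's stays closer to the cited reference and makes the factor $Q(i,i)$ appear mechanically from the cofactor expansion rather than from the ``missing root'' bookkeeping you flag. Your caveats about orientation conventions and the $\alpha_u=1$ limit are on point and are exactly the places where either proof can go wrong.
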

This theoretical insight connects significantly with an extension of Wilson’s algorithm. As shown in~\cite{sun2023opinion}, Wilson’s algorithm, based on loop-erased random walks, effectively simulates the probability distribution of rooted spanning trees when each node $i \in V$ is assigned a probability $\QQ(i,i)$ of being the root. Building upon this theoretical framework, we propose an algorithm \textsc{Forest}. The experiment in~\cite{sun2023opinion} demonstrates that the algorithm can maintain good effectiveness with a small number of samplings. More details are presented in Appendix~\ref{app:forest}.
The following theorem provide its time complexity:
\begin{theorem}\label{the:forest}
    When the number of samplings is $l$, the time complexity of algorithm \textsc{Forest} is $O(\frac{1}{\alpha_{\min}}ln)$.
\end{theorem}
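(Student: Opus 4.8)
The plan is to bound the expected running time of a single invocation of the Wilson-type sampler and then multiply by $l$. First I would pin down what algorithm \textsc{Forest} actually does: following the extension of Wilson's algorithm used in \cite{sun2023opinion}, one augments $\calG$ with a sink node $\phi$, adds an arc $j\to\phi$ of weight $Q(j,j)$ for every $j\in V$, and runs the loop-erased-random-walk procedure to produce a spanning tree of the augmented graph rooted at $\phi$; deleting $\phi$ yields a spanning converging forest $F$ drawn with probability proportional to $\prod_{u\in r(F)}Q(u,u)$, which by Lemma~\ref{lem:forest} is precisely the distribution under which $\rho_i$ equals the expectation of the size of the converging tree rooted at $i$ (and $0$ when $i$ is not a root). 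The estimator $\hat\rho_i$ averages this quantity over the $l$ sampled forests.

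The crux is the per-step absorption probability. With $\QQ=\DD(\II-\RR)^{-1}\RR$ one has $Q(j,j)=d^+_j\,\alpha_j/(1-\alpha_j)$, so in the augmented graph a walk currently at node $j$ moves to $\phi$ with probability $\frac{Q(j,j)}{d^+_j+Q(j,j)}=\alpha_j\ge\alpha_{\min}$. Hence, started from any node, the walk hits $\phi$ within a number of steps stochastically dominated by a $\mathrm{Geometric}(\alpha_{\min})$ variable, so $\Expect_u[\tau_\phi]\le 1/\alpha_{\min}$ for every $u$. Since $\phi$ belongs to the partial tree at every stage of Wilson's algorithm, the length $\tau_{\mathcal T}$ of any loop-erased walk segment — counted before loop erasure, which is the work actually performed — is pointwise at most $\tau_\phi$, so each segment has expected length at most $1/\alpha_{\min}$ regardless of the current partial tree $\mathcal T$.

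Next I would sum over segments. Each segment adds at least its starting node to the tree, so the number $N_{\mathrm{seg}}$ of segments is at most $n$; writing $L_1,L_2,\dots$ for their lengths, a tower-property argument that conditions on the history before segment $i$ begins (which determines whether an $i$-th segment exists) gives $\Expect\!\big[\sum_i L_i\big]\le\frac{1}{\alpha_{\min}}\Expect[N_{\mathrm{seg}}]\le\frac{n}{\alpha_{\min}}$. Since each walk step costs $O(1)$, loop erasure costs $O(1)$ amortized per step, and a single pass over the resulting forest labels every node with its root and accumulates the per-sample contributions in $O(n)$ time, one run of the sampler takes expected time $O(n/\alpha_{\min})$. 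Running $l$ independent samples and finally extracting the top $k$ values of $\hat\rho_i(1-\sss(i))$ with a max-heap adds only lower-order terms, so the total is $O(\frac{1}{\alpha_{\min}}\,ln)$.

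The main obstacle is more a matter of care than of depth: it lies in the segment-summation step, because the number of loop-erased segments and their lengths are neither fixed nor independent and the walk lengths must be controlled uniformly over all partial trees that can arise. The identity ``per-step absorption probability at $j$ equals $\alpha_j$'', which hinges on the specific choice $\QQ=\DD(\II-\RR)^{-1}\RR$, is exactly what makes the uniform bound $\Expect_u[\tau_\phi]\le 1/\alpha_{\min}$ available and lets the conditioning argument close.
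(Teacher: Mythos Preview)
Your argument is correct but takes a genuinely different route from the paper. The paper invokes the classical Propp--Wilson running-time identity for the loop-erased sampler: the expected total number of random-walk steps in one execution equals $\text{Tr}\big((\II-(\II-\RR)\PP)^{-1}\big)$, the trace of the Green's function; it then rewrites this as $\text{Tr}(\MM\RR^{-1})\le\frac{1}{\alpha_{\min}}\text{Tr}(\MM)\le\frac{n}{\alpha_{\min}}$ using only $\MM(i,i)\le1$. You bypass the trace identity entirely by observing that each loop-erased segment is pointwise shorter than the hitting time of the sink $\phi$, which is stochastically dominated by a $\mathrm{Geometric}(\alpha_{\min})$ variable, and then summing over at most $n$ segments via a tower-property argument. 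The paper's approach is terser once the trace formula is taken for granted and yields the sharper intermediate quantity $\text{Tr}(\MM\RR^{-1})$; your approach is fully self-contained and does not rely on that background result, at the price of the explicit conditioning step you flag as the main point of care. Both land on the same bound $O(ln/\alpha_{\min})$.
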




\section{Fast Exact-Selection Method via Asynchronous Updates}
Sampling methods face challenges in identifying optimal size-$k$ node sets with maximal potential influence, as their sample complexity scales as $\epsilon^{-2}$. In this section, we present a deterministic asynchronous algorithm that provides rigorous error guarantees, enabling exact computation of the highest-influence node set without reliance on stochastic samplings. To maintain consistency, we present our derivation using unweighted graphs, noting that the corresponding algorithms can be readily extended to weighted graphs.

\subsection{Asynchronous Update-Based Approximation}
    Let $\rr^{t}$ denote the residual vector at time $t$, initialized as $\rr^0 = \one$, and updated recursively via $\rr^{t+1} = \PP^\top(\II-\RR) \rr^{t}$. By unrolling this recurrence, we obtain $\rr^t = (\PP^\top(\II-\RR))^t\one$, which captures the distribution probability of a $t$-step absorbing random walk originating from a uniformly chosen node. 
    This probabilistic interpretation motivates our deterministic algorithm employing asynchronous computation. The asynchronous paradigm provides two fundamental advantages: First, it enables local computation where each node's residual evolves independently based on its neighborhoods; second, it supports node-specific termination through local convergence monitoring of individual node states. These properties collectively overcome the synchronization constraints of global methods while eliminating the sampling overhead of probabilistic approaches.

    \textbf{Global Influence Approximation.}\quad
    We propose an efficient asynchronous algorithm to compute the potential influence vector $\DDelta$ for the \textsc{OpinionMax} problem. The algorithm maintains a residual vector $\rr_a$ initialized to $\one$, an estimated potential influence vector $\hat{\DDelta}$ initialized to $\zero$, and uses a boundary vector $\hh = \epsilon\one$ (which $\epsilon \in (0,1)$) as the termination criterion. At each iteration, for every node $i$ satisfying $\rr_a(i) > \hh(i)$, the algorithm performs three key operations: (i) distributing $(1-\alpha_i)/d^+_i \cdot \rr_a(i)$ to each out-neighbor's residual $\rr_a(j)$ for $j \in N_{\text{out}}(i)$, (ii) accumulating $(1-s_i)\alpha_i\cdot\rr_a(i)$ to the node's own potential influence estimate $\hat{\DDelta}(i)$, and (iii) resetting $\rr_a(i)$ to 0. These updates are managed asynchronously through a first-in-first-out queue $Q$, processing nodes when they meet the residual threshold condition. The algorithm terminates when the residual condition $\rr(v) \le \hh(v)$ holds for all nodes $v \in V$. Each push operation maintains strict locality by only accessing direct neighbors, while the asynchronous execution enables efficient computation of the potential influence scores needed for opinion maximization. The pseudo code is provided in Algorithm~\ref{alg:2}.


      \begin{algorithm}[H]
        \caption{$\textsc{GlobalInfApprox}(\calG, \RR, \sss, \epsilon )$}
        \label{alg:2}
        \Input{Graph $\calG = (V, E)$, resistance matrix \RR, internal opinion vector \sss, error parameter $\epsilon$.}
        \Output{Estimated potential influence vector $\hat{\DDelta}$ and residue vector $\rr_a$.}
        {
        \textbf{Initialize} :
            $\hat{\DDelta} = \zero;\ \rr_a = \one$\\
        }
        \While{$\exists v \in V\ \text{s.t.}\ \rr_a(v) > \epsilon$}{
            $\hat{\DDelta}(v)  = \hat{\DDelta}(v)  + (1-\sss(v))\alpha_v \rr_a(v)$\\
            \For{each $u \in N_{\text{out}}(v)$}{
                $\rr_a(u) = \rr_a(u) + \frac{1-\alpha_v}{d^+_v} \rr_a(v)$\\
            }
            $\rr_a(v) = 0$
        }
        \textbf{return} $\hat{\DDelta},\ \rr_a$ \\
    \end{algorithm}

    The correctness of the algorithm relies on the relationship between the residual vector and the estimation error, which is guaranteed by the following lemma. 
    \begin{lemma}\label{lem:3}
        For any node $i \in V$ during the execution of Algorithm~\ref{alg:2}, the equality $\DDelta(i) - \hat{\DDelta}(i) = (1-\sss(i))\cdot \ee^\top_i\MM^\top \rr_a$ holds.
    \end{lemma}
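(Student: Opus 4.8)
The plan is to track the quantity $\DDelta(i) - \hat{\DDelta}(i)$ as a function of the current residual vector $\rr_a$ and show it equals $(1-\sss(i))\cdot \ee_i^\top \MM^\top \rr_a$ throughout the execution, by induction on the number of push operations performed. First I would establish the base case: at initialization, $\hat{\DDelta} = \zero$ and $\rr_a = \one$, so the claimed identity reads $\DDelta(i) = (1-\sss(i))\cdot \ee_i^\top \MM^\top \one$. Since $\DDelta(i) = \rho_i(1-\sss(i))$ and $\rho_i = \sum_{u\in V}\MM(u,i) = \ee_i^\top \MM^\top \one$ (the column sum of $\MM$, equivalently the $i$-th entry of $\MM^\top\one$), this holds immediately from the definitions in the Problem Formulation section.

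For the inductive step, consider a single push at node $v$ (i.e., $v$ satisfies $\rr_a(v) > \epsilon$ and gets processed). Let $\rr_a$ denote the residual before this push and $\rr_a'$ after. The update does three things: $\hat{\DDelta}'(v) = \hat{\DDelta}(v) + (1-\sss(v))\alpha_v \rr_a(v)$, each out-neighbor $u$ of $v$ gets $\rr_a'(u) = \rr_a(u) + \frac{1-\alpha_v}{d_v^+}\rr_a(v)$, and $\rr_a'(v) = 0$. In vector form, $\hat{\DDelta}' = \hat{\DDelta} + (1-\sss(v))\alpha_v \rr_a(v)\,\ee_v$ and $\rr_a' = \rr_a - \rr_a(v)\ee_v + \rr_a(v)\frac{1-\alpha_v}{d_v^+}\sum_{u\in N_{\text{out}}(v)}\ee_u = \rr_a - \rr_a(v)\big(\ee_v - (1-\alpha_v)\PP^\top\ee_v\big)$, using that $\PP^\top \ee_v = \frac{1}{d_v^+}\sum_{u\in N_{\text{out}}(v)}\ee_u$ since $\PP = \DD^{-1}\AA$. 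Thus $\rr_a - \rr_a' = \rr_a(v)\big(\II - (1-\alpha_v)\PP^\top\big)\ee_v$. The goal is to show $(\DDelta(i)-\hat{\DDelta}'(i)) = (\DDelta(i)-\hat{\DDelta}(i)) - (1-\sss(i))\,\ee_i^\top\MM^\top(\rr_a - \rr_a')$, which by the inductive hypothesis reduces to checking
\begin{equation*}
\big(\hat{\DDelta}'(i) - \hat{\DDelta}(i)\big) = (1-\sss(i))\,\ee_i^\top\MM^\top(\rr_a - \rr_a').
\end{equation*}
The left side is $(1-\sss(v))\alpha_v\rr_a(v)$ if $i=v$ and $0$ otherwise. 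The right side is $(1-\sss(i))\rr_a(v)\,\ee_i^\top\MM^\top\big(\II-(1-\alpha_v)\PP^\top\big)\ee_v = (1-\sss(i))\rr_a(v)\,\ee_v^\top\big(\II-(1-\RR)\PP\big)\MM\,\ee_i$, after transposing and using that $(1-\alpha_v)$ acting via $\ee_v$ equals $\ee_v^\top(\II-\RR)$ on that coordinate. Since $\MM = (\II-(\II-\RR)\PP)^{-1}\RR$, we have $(\II-(\II-\RR)\PP)\MM = \RR$, so $\ee_v^\top(\II-(\II-\RR)\PP)\MM\ee_i = \ee_v^\top\RR\ee_i = \alpha_v[i=v]$. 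Hence the right side equals $(1-\sss(i))\rr_a(v)\alpha_v[i=v]$, which matches the left side (when $i=v$, $(1-\sss(i))=(1-\sss(v))$). This closes the induction.

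The main obstacle, and the step requiring the most care, is the bookkeeping in the matrix identity $\ee_i^\top\MM^\top(\II-(1-\alpha_v)\PP^\top)\ee_v = \alpha_v[i=v]$: one must correctly handle the scalar $(1-\alpha_v)$ attached to a single coordinate $v$ versus the diagonal matrix $(\II-\RR)$ — they agree only because the factor multiplies $\ee_v$ (resp.\ $\ee_v^\top$) on the appropriate side, so $(1-\alpha_v)\PP^\top\ee_v = (\II-\RR)\PP^\top\ee_v$ holds as the $(\II-\RR)$ picks out exactly $\alpha_v$ in row/column $v$. Once that is pinned down, the algebra is routine. I would also remark that the lemma holds at every intermediate state (not just at termination), which is what makes it usable for deriving the final error bound $|\DDelta(i)-\hat{\DDelta}(i)| \le (1-\sss(i))\,\ee_i^\top\MM^\top\rr_a \le (1-\sss(i))\,\|\MM^\top\rr_a\|_\infty$-type estimates controlled by $\epsilon$, since $\MM^\top$ is column-stochastic-like and $\rr_a \le \epsilon\one$ entrywise at termination.
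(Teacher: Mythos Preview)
Your proof is correct and follows essentially the same inductive approach as the paper's (the paper argues in vector form using Hadamard products while you work coordinate-wise, but both hinge on the identity $(\II-(\II-\RR)\PP)\MM = \RR$). One small slip in your closing remarks: the equation $(1-\alpha_v)\PP^\top\ee_v = (\II-\RR)\PP^\top\ee_v$ is false in general, since placing $(\II-\RR)$ on the left scales entry $u$ by $1-\alpha_u$, not $1-\alpha_v$; what you actually use---correctly---in the body of the argument after transposing is $(1-\alpha_v)\ee_v^\top\PP = \ee_v^\top(\II-\RR)\PP$, or equivalently $\PP^\top(\II-\RR)\ee_v = (1-\alpha_v)\PP^\top\ee_v$.
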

    Using Lemma~\ref{lem:3}, we can further establish the relative error guarantees for the results returned upon termination of the Algorithm~\ref{alg:2}, as shown in the following lemma.
    \begin{lemma}\label{lem:4}
        For any parameter $\epsilon \in (0,1)$, the estimator $\hat{\DDelta}$ returned by Algorithm~\ref{alg:2} satisfies the following relation:
            $(1-\epsilon)\DDelta(v) \le \hat{\DDelta}(v) \le \DDelta(v), \forall v \in V$.
    \end{lemma}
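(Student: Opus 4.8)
The plan is to combine the exact error identity from Lemma~\ref{lem:3} with the termination condition of Algorithm~\ref{alg:2} and the column-sum normalization of $\MM$. By Lemma~\ref{lem:3}, at any point of the execution we have $\DDelta(v) - \hat{\DDelta}(v) = (1-\sss(v))\cdot \ee_v^\top \MM^\top \rr_a = (1-\sss(v))\sum_{u\in V}\MM(v,u)\,\rr_a(u)$. Wait — more carefully, $\ee_v^\top \MM^\top \rr_a = (\MM^\top \rr_a)(v) = \sum_{u} \MM(u,v)\rr_a(u)$, i.e.\ it pairs the residual with the $v$-th column of $\MM$. Since every entry of $\MM$ and of $\rr_a$ is nonnegative throughout (the residual starts at $\one$ and only receives nonnegative pushes, and resets to $0$), this immediately gives $\DDelta(v) - \hat{\DDelta}(v)\ge 0$, which is the upper bound $\hat{\DDelta}(v)\le \DDelta(v)$.

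For the lower bound, I would first observe that upon termination every residual satisfies $\rr_a(u)\le \epsilon$ for all $u\in V$. Hence, using nonnegativity of $\MM$,
\[
\DDelta(v) - \hat{\DDelta}(v) = (1-\sss(v))\sum_{u\in V}\MM(u,v)\,\rr_a(u) \le (1-\sss(v))\,\epsilon\sum_{u\in V}\MM(u,v) = (1-\sss(v))\,\epsilon\,\rho_v,
\]
where the last equality is the definition $\rho_v = \sum_{u\in V}\MM(u,v)$. Since $\DDelta(v) = \rho_v(1-\sss(v))$ by definition, the right-hand side is exactly $\epsilon\,\DDelta(v)$, so $\DDelta(v)-\hat{\DDelta}(v)\le \epsilon\DDelta(v)$, i.e.\ $\hat{\DDelta}(v)\ge (1-\epsilon)\DDelta(v)$. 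This establishes both inequalities.

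The only genuine subtlety — and the step I would be most careful about — is justifying that Lemma~\ref{lem:3} applies at termination, which in turn presupposes that Algorithm~\ref{alg:2} terminates at all. I would note that each push at node $v$ removes mass $\rr_a(v)>\epsilon$ and redistributes only a $(1-\alpha_v)\le (1-\alpha_{\min})<1$ fraction of it to neighbors while permanently absorbing the rest into $\hat\DDelta$, so the total residual mass $\one^\top\rr_a$ is strictly decreasing and bounded below by $0$; combined with the fact that a node only enters the queue when its residual exceeds the fixed threshold $\epsilon$, this guarantees termination after finitely many pushes, so the termination state with $\rr_a\le\epsilon\one$ is reached and Lemma~\ref{lem:3} can be invoked there. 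The rest is the routine sandwiching above; no heavy computation is needed.
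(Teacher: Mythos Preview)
Your proof is correct and follows essentially the same route as the paper: invoke Lemma~\ref{lem:3}, use nonnegativity of $\rr_a$ and $\MM$ for the upper bound, and use the termination condition $\rr_a\le\epsilon\one$ together with $\sum_u \MM(u,v)=\rho_v$ and $\DDelta(v)=(1-\sss(v))\rho_v$ for the lower bound. The paper writes the same chain compactly as $\DDelta-\hat{\DDelta}=(\one-\sss)\odot\MM^\top\rr_a\le\epsilon(\one-\sss)\odot\MM^\top\one=\epsilon\DDelta$; your extra paragraph on termination is a reasonable addition that the paper leaves implicit.
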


\textbf{Targeted Node Refinement.}\quad
    \begin{algorithm}[b]
        \caption{$\textsc{TargetedNodeRefine}(\calG, \rr_a, \rr^0,\epsilon)$}
        \label{alg:3}
        \Input{Graph $\calG = (V, E)$, forward residual vector $\rr_a$, initial residual vector $\rr^0$, error parameter $\epsilon$.}
        \Output{Estimator $\tilde{\Delta}$ and residue vector $\rr_s$.} 
        {
        \textbf{Initialize} :
            $\tilde{\Delta} = 0;\ \rr_s = \rr^0$\\
        }
        \While{$\exists v \in V\ \text{s.t.}\ \rr_s(v) > \epsilon \alpha_v$}{
            $\tilde{\Delta}  = \tilde{\Delta}  + \rr_a(v) \cdot \rr_s(v)$\\
            \For{each $u \in N_{\text{in}}(v)$}{
                $\rr_s(u) = \rr_s(u) + \frac{1-\alpha_u}{d^+_u} \rr_s(v)$\\
            }
            $\rr_s(v) = 0$
        }
        \textbf{return} $\tilde{\Delta},\ \rr_s$ \\
    \end{algorithm}
Algorithm~\ref{alg:2} provides solutions with rigorous relative error guarantees. As established in Lemma~\ref{lem:7}, these precise error bounds allow us to determine whether a given node must necessarily belong to the size-$k$ set with maximal potential influence. Furthermore, when the error tolerance $\epsilon$ becomes sufficiently small, we can completely identify the exact size-$k$ node set that maximizes opinion influence.
However, since each error threshold setting generates a distinct candidate set of boundary nodes, repeatedly computing global error bounds through Algorithm~\ref{alg:2} would incur unnecessary computational overhead. This motivates our key optimization: Can we focus computations exclusively on a reduced candidate set identified by Algorithm~\ref{alg:2} to improve efficiency?



The theoretical foundation comes from Lemma~\ref{lem:3}, which provides the exact decomposition for all nodes in the network. This decomposition directly motivates Algorithm~\ref{alg:3}, designed to approximate the influence propagation term $(1-\sss(i))\cdot \ee^\top_i\MM^\top\rr_a$ for any given node $i$.
The algorithm initializes with estimate $\tilde{\Delta} = 0$ and residual vector $\rr_s = \rr^0$, using $\hh = \epsilon\RR\one$ as the boundary vector. For each node $v$ satisfying $\rr_s(v) > \hh(v)$, the algorithm updates $\tilde{\Delta}$ by adding $\rr_a(v) \cdot\rr_s(v)$ and propagating residuals to in-neighbors before resetting $\rr_s(v) = 0$. The process repeats until $\rr_s(v) \le \hh(v)$ for all nodes.

While Algorithm~\ref{alg:3}'s asynchronous operations do not maintain the random walk interpretation, its correctness is established through the following analysis.
    \begin{lemma}\label{lem:5}
        When Algorithm~\ref{alg:3} is initialized with $\rr^0 =\alpha_v (1-\sss(v)) \ee_v$, the equality $\DDelta(v) - (\hat{\DDelta}(v) +\tilde{\Delta})=  \rr_a^\top \MM\RR^{-1} \rr_s$ holds throughout execution.
    \end{lemma}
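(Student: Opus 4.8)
The plan is to exhibit a loop invariant for Algorithm~\ref{alg:3}. Writing $v$ for the target node fixed by the initialization $\rr^0 = \alpha_v(1-\sss(v))\ee_v$, I will show that the scalar $\tilde{\Delta} + \rr_a^\top\MM\RR^{-1}\rr_s$ is unchanged by every push, and is therefore always equal to its value at initialization, which I identify with $\DDelta(v) - \hat{\DDelta}(v)$ using Lemma~\ref{lem:3}. Rearranging this equality gives precisely the asserted $\DDelta(v) - (\hat{\DDelta}(v) + \tilde{\Delta}) = \rr_a^\top\MM\RR^{-1}\rr_s$; note that $\DDelta(v)$, $\hat{\DDelta}(v)$ and the vector $\rr_a$ are all held fixed while Algorithm~\ref{alg:3} runs. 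The single algebraic fact used throughout is $\MM\RR^{-1} = (\II-(\II-\RR)\PP)^{-1}$, which is immediate from $\MM = (\II-(\II-\RR)\PP)^{-1}\RR$ together with the invertibility of $\RR$ (all $\alpha_i>0$).

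\textbf{Base case.} At initialization $\tilde{\Delta}=0$ and $\rr_s = \rr^0 = \alpha_v(1-\sss(v))\ee_v$, hence $\RR^{-1}\rr_s = (1-\sss(v))\ee_v$ and $\rr_a^\top\MM\RR^{-1}\rr_s = (1-\sss(v))\,\rr_a^\top\MM\ee_v = (1-\sss(v))\,\ee_v^\top\MM^\top\rr_a$. By Lemma~\ref{lem:3} this equals $\DDelta(v) - \hat{\DDelta}(v)$, so the invariant quantity starts at $\DDelta(v) - \hat{\DDelta}(v)$, matching the initial value $0 + \rr_a^\top\MM\RR^{-1}\rr^0$.

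\textbf{Inductive step.} Consider a single push performed at some node $w$ (the one currently satisfying $\rr_s(w) > \epsilon\alpha_w$). It increments $\tilde{\Delta}$ by $\rr_a(w)\rr_s(w)$ and replaces $\rr_s$ by $\rr_s - \rr_s(w)\ee_w + \rr_s(w)\sum_{u\in N_{\text{in}}(w)}\frac{1-\alpha_u}{d^+_u}\ee_u$, so the net change in $\rr_s$ equals $-\rr_s(w)\big(\II-(\II-\RR)\PP\big)\ee_w$; here I used that $\sum_{u\in N_{\text{in}}(w)}\frac{1-\alpha_u}{d^+_u}\ee_u$ is exactly the $w$-th column of $(\II-\RR)\PP$, since $((\II-\RR)\PP)\ee_w = \sum_u (1-\alpha_u)\frac{a_{u,w}}{d^+_u}\ee_u$ and $a_{u,w}=1$ precisely for $u\in N_{\text{in}}(w)$. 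Consequently the change in $\rr_a^\top\MM\RR^{-1}\rr_s$ is $-\rr_s(w)\,\rr_a^\top\MM\RR^{-1}\big(\II-(\II-\RR)\PP\big)\ee_w = -\rr_s(w)\,\rr_a^\top\ee_w = -\rr_a(w)\rr_s(w)$, using $\MM\RR^{-1} = (\II-(\II-\RR)\PP)^{-1}$. This exactly cancels the increment in $\tilde{\Delta}$, so the push leaves $\tilde{\Delta}+\rr_a^\top\MM\RR^{-1}\rr_s$ unchanged. Since this holds for each push independently of the asynchronous processing order and independently of the termination threshold, induction gives $\tilde{\Delta}+\rr_a^\top\MM\RR^{-1}\rr_s = \DDelta(v)-\hat{\DDelta}(v)$ at every point of the execution, which is the claim after rearrangement.

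\textbf{Main obstacle.} The argument is short; the only point requiring care is matching the asynchronous in-neighbor propagation of Algorithm~\ref{alg:3} with right-multiplication by the $w$-th column of $(\II-\RR)\PP$ — rather than $\PP^\top(\II-\RR)$, which governs the forward iteration of Algorithm~\ref{alg:2} — and keeping the transposes consistent between Lemma~\ref{lem:3} and the expression $\rr_a^\top\MM\RR^{-1}\rr_s$. Once that correspondence is in place, the cancellation between the $\tilde{\Delta}$-update and the $\rr_s$-update is a one-line telescoping identity.
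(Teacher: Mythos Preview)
Your proof is correct and follows essentially the same approach as the paper: both establish the invariant by induction on pushes, verifying the base case via Lemma~\ref{lem:3} and showing that a single push at a node $w$ changes $\rr_s$ by $-\rr_s(w)\big(\II-(\II-\RR)\PP\big)\ee_w$, so that the cancellation $\MM\RR^{-1}\big(\II-(\II-\RR)\PP\big)=\II$ makes the increment to $\tilde{\Delta}$ exactly offset the change in $\rr_a^\top\MM\RR^{-1}\rr_s$. Your phrasing in terms of the conserved scalar $\tilde{\Delta}+\rr_a^\top\MM\RR^{-1}\rr_s$ is a slightly cleaner way to organize the same computation.
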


    Building upon Lemma~\ref{lem:5}, we can derive the following absolute error bound when specific initialization conditions are met.
    \begin{lemma}\label{lem:6}
        For Algorithm~\ref{alg:3} with initial residual $\rr^0=\alpha_v (1-\sss(v)) \ee_v$, given any parameter $\epsilon \in (0,1)$, the estimator $\tilde{\Delta}$ satisfies the absolute error bound:
            $0 < \DDelta(v) - (\hat{\DDelta}(v) +\tilde{\Delta}) \le \epsilon\cdot(\rr_a)_\tsum $.
    \end{lemma}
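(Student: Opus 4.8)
The plan is to combine Lemma~\ref{lem:5} with the termination condition of Algorithm~\ref{alg:3} and the structural properties of $\MM$ and $\RR$. Lemma~\ref{lem:5} gives us the exact error identity $\DDelta(v) - (\hat{\DDelta}(v) + \tilde{\Delta}) = \rr_a^\top \MM \RR^{-1} \rr_s$, so the entire proof reduces to bounding the scalar $\rr_a^\top \MM \RR^{-1} \rr_s$ from above by $\epsilon \cdot (\rr_a)_\tsum$ and from below by $0$.

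For the lower bound, I would argue that every quantity involved is nonnegative: $\rr_a \ge \zero$ entrywise (it is produced by Algorithm~\ref{alg:2}, which only ever adds nonnegative amounts to entries or zeros them), $\MM$ has nonnegative entries (it is row-stochastic), $\RR^{-1}$ is diagonal with positive entries $1/\alpha_v > 0$, and $\rr_s \ge \zero$ at every point of the execution of Algorithm~\ref{alg:3} (it starts at $\alpha_v(1-\sss(v))\ee_v \ge \zero$ and pushes only add nonnegative amounts). Hence $\rr_a^\top \MM \RR^{-1} \rr_s \ge 0$. To get the strict inequality $\DDelta(v) - (\hat{\DDelta}(v)+\tilde{\Delta}) > 0$, I would note that $\rr_s$ never becomes identically zero as long as there is remaining mass to propagate—more precisely, since $\rr^0 = \alpha_v(1-\sss(v))\ee_v$ carries strictly positive total mass (assuming $\sss(v) < 1$; the degenerate case $\sss(v)=1$ makes $\DDelta(v)=0$ and can be handled separately), and because $\MM\RR^{-1}$ has a strictly positive entry pairing $\rr_a$'s nonzero coordinates with $\rr_s$'s, the inner product stays strictly positive. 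I should be a little careful here and may instead phrase the strictness using the fact that termination of Algorithm~\ref{alg:3} leaves $\rr_s$ with some positive entries unless $\rr^0 = \zero$.

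For the upper bound, I would use the termination guarantee of Algorithm~\ref{alg:3}: upon termination, $\rr_s(v) \le \epsilon \alpha_v$ for all $v \in V$, i.e., $\rr_s \le \epsilon \RR \one$ entrywise. Therefore
\begin{equation*}
\rr_a^\top \MM \RR^{-1} \rr_s \le \rr_a^\top \MM \RR^{-1} (\epsilon \RR \one) = \epsilon\, \rr_a^\top \MM \one = \epsilon\, \rr_a^\top \one = \epsilon \cdot (\rr_a)_\tsum,
\end{equation*}
where the first inequality uses that $\rr_a^\top \MM \RR^{-1}$ is a nonnegative row vector (product of nonnegative $\rr_a^\top$, nonnegative $\MM$, positive-diagonal $\RR^{-1}$) so it is order-preserving, the cancellation $\RR^{-1}\RR = \II$ is immediate, and the final step uses $\MM \one = \one$ (the row-stochasticity of $\MM$ recorded in the Preliminaries) together with $\rr_a^\top \one = (\rr_a)_\tsum$.

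The main obstacle is justifying the sign conditions rigorously rather than the algebra: specifically, establishing that $\rr_s \ge \zero$ is maintained as an invariant throughout the asynchronous execution of Algorithm~\ref{alg:3} (a short induction on push operations, using $0 \le (1-\alpha_u)/d^+_u$) and that $\rr_a \ge \zero$ coming out of Algorithm~\ref{alg:2} (another push-by-push induction), and then confirming the strict positivity for the left inequality without over-claiming—this is where I would spend the most care, possibly by invoking Lemma~\ref{lem:3}/Lemma~\ref{lem:5} structure to note $\DDelta(v) - (\hat{\DDelta}(v)+\tilde\Delta)$ equals the tail of a convergent nonnegative series that has not yet been fully exhausted. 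The upper-bound chain of inequalities is then essentially mechanical given the termination condition and $\MM\one = \one$.
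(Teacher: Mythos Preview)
Your proposal is correct and follows essentially the same argument as the paper: invoke Lemma~\ref{lem:5} for the error identity, use nonnegativity of $\rr_a$, $\MM$, $\RR^{-1}$, $\rr_s$ for the lower bound, and combine the termination condition $\rr_s \le \epsilon\RR\one$ with $\MM\one=\one$ for the upper bound. If anything, you are more careful than the paper about justifying the nonnegativity invariants and the strictness of the lower bound, which the paper simply asserts.
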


\subsection{Fast Exact-Selection Algorithm }
 Let $T \subseteq V$ be the set of nodes guaranteed to be contained in the optimal size-$k$ node set with maximal value of sequence $\aaa$, and let $C \subseteq V\setminus T$ denote the candidate nodes that may belong to this optimal set when considering estimation errors. Given an estimator $\hat{\aaa}$ with uniform error bounds (either absolute or relative), we can formally characterize these sets through the following lemma.
\begin{lemma}\label{lem:7}
    Let $\hat{\aaa}$ approximate $\aaa$ with uniform error\ $0 < \aaa(i) - \hat{\aaa}(i) \leq \epsilon_a$ or\ \ $0  < \aaa(i) - \hat{\aaa}(i) \leq \epsilon_b\aaa(i)$ for all $i \in V$, and let $\hat{\aaa}_{<i>}$ denote the $i$-th largest value in $\hat{\aaa}$. Then,

    \begin{itemize}
        \item Element $i \in T$ if either
        $\hat{\aaa}(i) \geq \hat{\aaa}_{<k+1>} + \epsilon_a\ \text{or}\ \hat{\aaa}(i) \geq \hat{\aaa}_{<k+1>}/(1-\epsilon_b);$
        
        \item Element $i \in C$ if $i \notin T$ and either $\hat{\aaa}(i) \geq \hat{\aaa}_{<k>} - \epsilon_a\ \text{or} \ \hat{\aaa}(i) \geq \hat{\aaa}_{<k>}(1-\epsilon_b).$
    \end{itemize}
\end{lemma}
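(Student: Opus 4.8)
The plan is to prove both the absolute-error and relative-error cases of Lemma~\ref{lem:7} by the same elementary argument: comparing the true value $\aaa(i)$ of a candidate index against the true value of whatever index currently sits in position $k$ or $k+1$ of the \emph{estimated} ranking, and using the uniform one-sided error bound to convert estimated-value inequalities into true-value inequalities.

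\textbf{Membership in $T$ (absolute case).} Suppose $\hat{\aaa}(i) \geq \hat{\aaa}_{<k+1>} + \epsilon_a$. Let $S$ be any size-$k$ subset of $V$ not containing $i$. Since $|S| = k$ and $S$ avoids $i$, by pigeonhole $S$ must contain some index $j$ whose estimated value is at most the $(k+1)$-th largest, i.e.\ $\hat{\aaa}(j) \leq \hat{\aaa}_{<k+1>}$ — more precisely, among the indices with the top $k$ estimated values together with $i$ there are $k+1$ of them, so $S$ cannot contain all of them and hence contains a $j$ with $\hat{\aaa}(j) \le \hat{\aaa}_{<k+1>}$. Then I chain the bounds: $\aaa(i) \geq \hat{\aaa}(i) \geq \hat{\aaa}_{<k+1>} + \epsilon_a \geq \hat{\aaa}(j) + \epsilon_a \geq \aaa(j)$, where the first inequality uses $\aaa(i) \ge \hat\aaa(i)$ (from $0 < \aaa(i)-\hat\aaa(i)$) and the last uses $\aaa(j) - \hat\aaa(j) \le \epsilon_a$. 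Thus swapping $j$ for $i$ in $S$ never decreases $\sum_{\ell \in S}\aaa(\ell)$, so some optimal size-$k$ set contains $i$; since this holds for the configuration of ties one should note the argument shows $i$ belongs to \emph{an} optimal set, which is the intended reading of $T$ (or, if $T$ is meant as the intersection of all optimal sets, one strengthens to a strict inequality / tie-breaking remark).

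\textbf{Membership in $T$ (relative case).} The same swap argument applies with the multiplicative conversion: from $\hat{\aaa}(i) \geq \hat{\aaa}_{<k+1>}/(1-\epsilon_b)$ and the existence of $j \in S$ with $\hat{\aaa}(j) \leq \hat{\aaa}_{<k+1>}$, I get $\aaa(i) \geq \hat{\aaa}(i) \geq \hat{\aaa}_{<k+1>}/(1-\epsilon_b) \geq \hat{\aaa}(j)/(1-\epsilon_b) \geq \aaa(j)$, where the last step uses $\hat\aaa(j) \ge (1-\epsilon_b)\aaa(j)$, which is exactly the rearrangement of $\aaa(j) - \hat\aaa(j) \le \epsilon_b \aaa(j)$. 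So again $i$ may be included without loss.

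\textbf{Membership in $C$ (both cases).} Here I argue the contrapositive: if $i \notin T$ is \emph{not} even a candidate, then $i$ cannot lie in any optimal set, so it suffices to show that when the stated $C$-condition fails, $i$ is dominated. If $\hat{\aaa}(i) < \hat{\aaa}_{<k>} - \epsilon_a$ (absolute case), then for the $k$ indices $j$ realizing the top-$k$ estimated values we have $\aaa(j) \geq \hat{\aaa}(j) \geq \hat{\aaa}_{<k>} > \hat{\aaa}(i) + \epsilon_a \geq \aaa(i)$, so all $k$ of them strictly beat $i$ in true value, forcing $i$ out of every optimal set; contrapositively, $i$ in an optimal set implies the displayed $C$-inequality. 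The relative case is identical with $\hat{\aaa}_{<k>}(1-\epsilon_b)$ playing the role of $\hat{\aaa}_{<k>} - \epsilon_a$ and the multiplicative conversion $\aaa(j) \ge \hat\aaa(j) \ge \hat\aaa_{<k>} > \hat\aaa(i)/(1-\epsilon_b) \ge \aaa(i)$. The only subtlety — and the step I expect to need the most care — is the bookkeeping around ties in the ranking (what exactly $\hat\aaa_{<k>}$ and $\hat\aaa_{<k+1>}$ mean when values repeat) and fixing once and for all whether $T$ denotes a single optimal set, the intersection of all optimal sets, or the set of indices provably forced into some optimal set; once that convention is pinned down, each bullet is a two-line chain of inequalities as above.
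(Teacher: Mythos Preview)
Your proof is correct and follows essentially the same approach as the paper: both arguments convert the hypotheses on $\hat{\aaa}(i)$ into a chain of inequalities using the one-sided error bound $\hat{\aaa}(j)\le\aaa(j)\le\hat{\aaa}(j)+\epsilon_a$ (or the multiplicative analogue), with the $T$-part showing $\aaa(i)\ge\aaa(j)$ for competitors outside the top~$k$ and the $C$-part handled by the contrapositive you describe. The only difference is cosmetic --- the paper compares $i$ directly to every $j\notin\mathcal S^*$ via the order-statistic inequality $\aaa_{\langle k+1\rangle}\le\hat{\aaa}_{\langle k+1\rangle}+\epsilon_a$, whereas you frame it as a swap argument against an arbitrary $S$ not containing $i$; your remark about ties and the precise meaning of $T$ is a fair caveat that the paper simply elides.
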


\begin{algorithm}[b]
    \caption{$\textsc{MaxInfluenceSelector}(\calG, \RR, \sss, \epsilon, k)$}
    \label{alg:4}
    \Input{Graph $\calG = (V, E)$, opinions $\sss$, parameter $k$.}
    \Output{Optimal node set $T$.} 
    \textbf{Initialize} : $T = \emptyset$;\ $C = V$\\
    $\hat{\DDelta},\rr_a = \textsc{GlobalInfApprox}(\calG, \RR,\sss, \epsilon)$\\
    Update $T$ and $C$ via Lemma~\ref{lem:7} with $\{\hat{\DDelta}(v)\}_{v\in V}$ and error $\epsilon$\\
    Initialize $\rr_v = \alpha_v(1-s_v)\cdot\ee_v,\ \forall v \in C$\\
    \For{$\epsilon' = 1, \frac{1}{2}, \ldots, \frac{1}{2n},\ldots$}{
        \For{$v \in C$}{
            $\tilde{\Delta},\rr_v = \textsc{TargetedNodeRefine}(\calG, \rr_a, \rr_v, \epsilon'/(\rr_a)_\tsum)$\\
            $\hat{\DDelta}(v) = \hat{\DDelta}(v) +\tilde{\Delta}$\\
        }
        Update $T$ and $C$ via Lemma~\ref{lem:7} with $\{\hat{\DDelta}(v)\}_{v\in C}$ and error $\epsilon'$\\
        \If{$|T| = k$}{\Return{$T$}\\}
    }
\end{algorithm}
    
Let $\gap$ denote the difference between the $k$-th and $(k+1)$-th largest values in the true potential influence vector $\DDelta$. When the absolute error $\epsilon < \gap$, we can guarantee exact identification of the optimal size-$k$ node set. However, since $\gap$ is typically unknown a priori, we employ an iterative refinement approach that progressively tightens the error bound $\epsilon$ across successive iterations. This process leverages residual vectors from previous iterations as initial conditions, thereby reducing recomputation overhead through warm-start optimization. 
The algorithm operates through two computational phases. First, global relative error bounds are applied to identify a small candidate set $C$. Then, Algorithm~\ref{alg:3} computes absolute error guarantees specifically for nodes in $C$ while progressively shrinking the candidate set size. 
The procedure terminates when the error bound satisfies $\epsilon<\gap$ and the candidate set becomes empty, at which point we obtain the exact optimal size-$k$ set with maximal potential influence.
The pseudo code is provided in Algorithm~\ref{alg:4}.

Theoretical guarantees of this approach are established in the following theorem, which provides an upper bound on the time complexity.

\begin{theorem}\label{the:3}
    For sufficiently small $\epsilon$, the upper bound on the time complexity of Algorithm~\ref{alg:4} is $O(\frac{d^+_{\max}n}{\alpha_{\min}}\log\frac{1}{\epsilon} + \frac{m}{\gap\cdot \alpha_{\min}})$.
\end{theorem}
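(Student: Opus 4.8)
The plan is to split the running time of \textsc{MaxInfluenceSelector} (Algorithm~\ref{alg:4}) into the single call to \textsc{GlobalInfApprox}, which accounts for the $O(\frac{d^+_{\max}n}{\alpha_{\min}}\log\frac1\epsilon)$ term, and the refinement loop, which accounts for the $O(\frac{m}{\gap\alpha_{\min}})$ term; the set updates via Lemma~\ref{lem:7} cost $O(n\log n)$ each and, for small $\epsilon$, are absorbed into these. In both subroutines the cost is (number of push operations)~$\times$~(work per push): a push of $v$ in \textsc{GlobalInfApprox} scans $N_{\text{out}}(v)$, costing $O(d^+_v)\le O(d^+_{\max})$, while a push of $v$ in \textsc{TargetedNodeRefine} scans $N_{\text{in}}(v)$.

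For the first stage, I would show each node is pushed $O(\alpha_{\min}^{-1}\log\frac1\epsilon)$ times. The residual recurrence $\rr^{t}=(\PP^\top(\II-\RR))^{t}\one$ is the deterministic counterpart of a synchronous power iteration; every column of $\PP^\top(\II-\RR)$ sums to $1-\alpha_j\le 1-\alpha_{\min}$, so $\|\rr^{t}\|_1\le(1-\alpha_{\min})^{t}n$, and after $T=O(\alpha_{\min}^{-1}\log\frac{n}{\epsilon})$ synchronous steps every coordinate drops below $\epsilon$. A monotonicity argument (the Gauss--Seidel-style updates of Algorithm~\ref{alg:2} converge at least as fast as this Jacobi iteration, and the $\epsilon$-threshold only skips pushes) then shows the FIFO schedule finishes within $O(T)$ sweeps, so no node is pushed more than $O(T)$ times. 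Multiplying by the at most $n$ active nodes and $O(d^+_{\max})$ per push, and using $\log\frac{n}{\epsilon}=O(\log\frac1\epsilon)$ for small $\epsilon$, gives the first term; I also record $(\rr_a)_\tsum\le\epsilon n$ for later use.

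For the second stage, I would first bound the number of outer iterations. By Lemma~\ref{lem:6} the iteration with parameter $\epsilon'$ leaves every candidate estimate with absolute error at most $\epsilon'$; by Lemma~\ref{lem:7} (absolute-error case) and the definition of $\gap$, once $\epsilon'\le \gap/2$ the set $T$ already contains the $k$ true maximizers, so the loop halts after $O(\log\frac1\gap)$ iterations with smallest passed threshold $\tau_f=\Theta(\gap/(\rr_a)_\tsum)$. I would then count the pushes of \textsc{TargetedNodeRefine} using the potential $\Phi(\rr_s)=\rr_s^\top\RR^{-1}\rrho=\sum_w\rr_s(w)\rho_w/\alpha_w\ge 0$: the identity $(\II-\PP^\top(\II-\RR))\RR^{-1}\rrho=\one$ (equivalently $\MM^\top\one=\rrho$) makes each push of $w$ decrease $\Phi$ by exactly $\rr_s(w)$. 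For candidate $v$ the start vector $\rr^0=\alpha_v(1-s_v)\ee_v$ gives $\Phi_0=\DDelta(v)$, and because warm-starting never resets $\rr_s$ and only lowers the threshold, the total pushes done for $v$ over all iterations is at most $\Phi_0/(\tau_f\alpha_{\min})=O(\DDelta(v)(\rr_a)_\tsum/(\gap\alpha_{\min}))$. Summing over $v\in C$ with $\sum_v\DDelta(v)\le n$ gives $O(n(\rr_a)_\tsum/(\gap\alpha_{\min}))$ pushes in all; multiplying by $O(d^-_{\max})$ per push and using $(\rr_a)_\tsum\le\epsilon n$, so $d^-_{\max}n(\rr_a)_\tsum=O(m)$ once $\epsilon$ is small enough, yields the second term.

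The main obstacle is the second-stage bookkeeping: one must check that logarithmically many outer iterations together with warm-starting make the per-candidate push count \emph{telescope} to $\Phi_0/(\tau_f\alpha_{\min})$ rather than being multiplied by the iteration count, and that $\Phi$ genuinely drops by the pushed mass (which rests on $\RR^{-1}\rrho$ being the fixed point of the adjoint of the backward recursion $\rr_s\mapsto(\II-\RR)\PP\rr_s$). A secondary difficulty is obtaining the clean $O(\alpha_{\min}^{-1}\log\frac1\epsilon)$ push count for \textsc{GlobalInfApprox} rather than the weaker $O(\frac1{\alpha_{\min}\epsilon})$ bound, which needs the Gauss--Seidel/Jacobi domination together with a check that the $\epsilon$-threshold does not break it. Routine steps I would not belabour: every residual vector stays nonnegative, early outer iterations may perform no pushes for some candidates, and the Lemma~\ref{lem:7} updates are correct and cost $O(n\log n)$.
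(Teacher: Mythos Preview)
Your two-stage decomposition and the per-push cost accounting match the paper. The arguments inside each stage differ.

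For \textsc{GlobalInfApprox}, the paper does \emph{not} invoke a Gauss--Seidel/Jacobi comparison. Instead it threads a dummy marker through the FIFO queue to partition the push sequence into epochs $S^{(0)},S^{(1)},\ldots$; because every node outside $S^{(i)}$ has residual at most $\epsilon$ at the start of epoch $i$, an averaging inequality gives $\sum_{v\in S^{(i)}}\rr_a^{(i)}(v)\ge\frac{|S^{(i)}|}{n}(\rr_a)_{\tsum}^{(i)}$, and hence epoch $i$ contracts $(\rr_a)_\tsum$ by the factor $1-\frac{\alpha_{\min}}{n}|S^{(i)}|$. Telescoping and $1-x\le e^{-x}$ then yield $\sum_i|S^{(i)}|=O\bigl(\frac{n}{\alpha_{\min}}\log\frac1\epsilon\bigr)$ directly. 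Your proposed route---bound the thresholded FIFO by the threshold-free Jacobi sweep count---runs into exactly the difficulty you flag: skipping a push leaves mass in place rather than propagating it, so the thresholded residual is \emph{not} coordinatewise below the threshold-free one, and ``sweep'' has no crisp meaning for a queue whose membership fluctuates. I do not see a short way to close this without essentially reconstructing the epoch argument; as written this is the real gap in your plan.

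For the refinement loop, your potential $\Phi(\rr_s)=\rr_s^\top\RR^{-1}\rrho$ together with the identity $(\II-\PP^\top(\II-\RR))\RR^{-1}\rrho=\one$ is correct and is a genuinely different route from the paper. The paper instead bounds the \emph{first} refinement round by $\frac{(\rr_a)_\tsum\,d^-_{\max}}{\alpha_{\min}}\sum_{v\in C_a}\rho_v$ (using that the cumulative mass ever pushed from a node $u$ cannot exceed $(1-s_v)\MM(u,v)$), and then each subsequent round by $|C_a|\cdot\frac{2m}{\alpha_{\min}}$ (using that the warm-started residual already meets the previous threshold), summed over the $O(\log\frac1\gap)$ halvings. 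Both reach the $O\bigl(\frac{m}{\gap\,\alpha_{\min}}\bigr)$ term under a ``sufficiently small $\epsilon$'' hypothesis, but through different instantiations of it: the paper needs $|C_a|=O(1)$, whereas you need $\epsilon\,n^2 d^-_{\max}=O(m)$. Your potential handles all rounds uniformly and avoids the per-round bookkeeping; the paper's analysis trades that for a bound depending on $|C_a|$ rather than on the crude $\sum_{v\in C}\DDelta(v)\le n$.
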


\section{Experiments}
In this section, we experimentally evaluate our proposed three algorithms. Additional experimental results and analyses are presented in Appendix~\ref{sec:exp}.

\textbf{Machine.}
Our extensive experiments were conducted on a Linux server equipped with 28-core 2.0GHz  Intel(R) Xeon(R) Gold 6330 CPU and 1TB of main memory. All the algorithms we proposed are implemented in \textit{Julia v1.10.7} using single-threaded execution.

\begin{table}[htbp]
    \caption{Datasets}
    \label{tab:datasets}
    \centering
    \begin{adjustbox}{max width=\textwidth}
    \begin{tabular}{l llllllll}
    
    \toprule
    & DBLP & Google & YoutubeSnap & Pokec & Flixster & LiveJournal & Twitter & SinaWeibo \\
    \midrule
    Nodes & 317,080   & 875,713  & 1,134,890 & 1,632,803 & 2,523,386 & 4,847,571 & 41,652,230 & 58,655,849 \\
    Edges & 1,049,866 & 5,105,039 & 2,987,624 & 30,622,564 & 7,918,801 & 68,993,773 & 1,468,365,182 & 261,321,071 \\
    $d^+_{\max}$ & 343 & 456 & 28,754 & 8,763 & 1,474 & 20,296 & 770,155 & 278,491 \\
    Type & undirected & directed & undirected & directed & undirected & directed & directed & undirected \\
    \bottomrule
    \end{tabular}
    \end{adjustbox}
\end{table}

\textbf{Datasets and Metrics.}\quad We use 8 benchmark datasets that are obtained from the Koblenz Network Collection~\cite{kunegis2013konect}, SNAP~\cite{leskovec2016snap} and Network Repository~\cite{rossi2015network}. Table~\ref{tab:datasets} summarizes the key characteristics of the networks used in our experiments, including network name, number of nodes, number of edges, maximum out-degree, and network type. The networks are listed in ascending order based on node count. A more detailed description of the datasets is provided in Appendix~\ref{sec:data}.
On each dataset, we employ the algorithm \textsc{GlobalInfApprox} with a relative error parameter $10^{-12}$ to compute the ground-truth structural centrality scores. This ensures that each ground-truth value has at most $10^{-12}$ relative error. We evaluate the maximum influence node selection problem for varying set sizes $k \in \{1,2,4,\ldots,1024\}$. We evaluate the accuracy of each method using three metrics: overall opinion, along with two standard ranking measures, \textit{precision} and \textit{Normalized Discounted Cumulative Gain (NDCG)}~\cite{jarvelin2017ir}. We initialize internal opinions with uniform distribution as their configuration does not significantly affect experimental results.

\textbf{Methods.}\quad We present numerical results to evaluate the performance of our proposed algorithms, \textsc{RWB}, \textsc{Forest} and \textsc{MaxInfluenceSet} against existing baselines. For consistency, we abbreviate \textsc{MaxInfluenceSet} as \textsc{MIS} in the following text. In all experiments, we set the absolute error parameters of \textsc{RWB} to $10^{-2}$. For algorithm \textsc{Forest}, we set the number of samplings $l = 4000$. We set the initial error parameter of algorithm \textsc{MIS} to $10^{-3}$. 
To further demonstrate the effectiveness of our approach, we compare against  five widely-used benchmark algorithms: \textsc{topRandom}, \textsc{topDegree}, \textsc{topCloseness}, \textsc{topBetweenness}, and \textsc{topPageRank}~\cite{vassio2014message} across all networks.

\textbf{Resistance Coefficient Distributions.}\quad In our experiments, resistance coefficient are generated using three distinct distributions: uniform, normal, and exponential. They are abbreviated as Unif., Norm., and Exp. in the figures. We set the minimum value of the resistance coefficient $\alpha_{\min} = 0.01$ because a zero resistance coefficient would lead to non-convergent results. For the uniform distribution, each node $i$ is assigned an opinion $s_i$ uniformly sampled from the interval $[\alpha_{\min},1]$. For the normal distribution, each node $i$ is assigned a sample $z_i$ drawn from the standard normal distribution $z_i \sim \mathcal{N}(0,1)$. These values are then normalized to the interval $[\alpha_{\min}, 1]$. For the exponential distribution, we generate $n$ positive numbers $x$ using the probability density function $f(x) = \mathrm{e}^{x_{\min}} \mathrm{e}^{-x}$, where $x_{\min} > 0$. These values are similarly normalized to the range $[\alpha_{\min},1]$ to represent resistance coefficients.

\begin{table}[htbp]
    \caption{Running time with $k=64$.} 
    \label{tab:data}
    \centering
    \begin{adjustbox}{max width=\textwidth}
    \begin{tabular}{l r rrr rrr rrr}
    \toprule
        \multirow{3}{*}{Name}
        & \multicolumn{9}{c}{Time(s)}\\
    \cmidrule(lr){2-10}
    
    & \multicolumn{3}{c}{{\textsc{RWB}}} 
    & \multicolumn{3}{c}{{\textsc{Forest}}} 
    &  \multicolumn{3}{c}{{\textsc{MIS}}}\\ 
    \cmidrule(lr){2-4}
    \cmidrule(lr){5-7}
    \cmidrule(lr){8-10}
    &Unif. & Norm. & Exp.& Unif. & Norm. & Exp. & Unif. & Norm. & Exp.\\
    \midrule
    DBLP 
    & 120.61 & 101.72 &703.31
    &66.02& 58.65 &104.41
    &0.28 & 0.51 & 1.81 
    \\
    Google 
    & 344.02 & 326.83 & 1368.66 
    &162.87& 154.62& 223.67
    & 0.39 & 0.40 & 2.91 
    \\
    YoutubeSnap 
    & 896.66 & 900.97 & 10495.42 
    &196.53& 208.92& 312.03
    &  1.03 & 0.87 & 6.85 
     \\
    Pokec 
    & 1037.87 & 1028.07& 6562.19
    &467.81& 493.82 & 784.69
    &  3.72 & 3.21 & 24.68 
    \\
    Flixster 
    & 2342.12 & 2142.68 & -
    &326.52& 321.75 & 376.91
    &  2.59 & 2.14 &13.56
    \\
    LiveJournal 
    & 3275.20 &  2580.90 &- 
    &1454.37& 1472.17 &2147.34
    & 8.61 & 13.17 &56.49 
    \\
    Twitter 
    & - & - & -
    &12115.80&13230.32& 13852.17
    & 279.73 & 270.61&1841.55  
    \\
    SinaWeibo 
    & - & - & - 
    &10538.27& 10749.41 &11743.13
    & 156.83 & 171.75 &1712.25 
    \\
    \bottomrule
    \end{tabular}
    \end{adjustbox}
\end{table}

\textbf{Efficiency.}\quad Table~\ref{tab:data} compares the execution time of three algorithms: RWB, \textsc{Forest}, and our proposed MIS (with $k=64$), under various resistance coefficient distributions. The direct matrix inversion based \textsc{Exact} algorithm was excluded from the comparison as it failed to complete even on the smallest DBLP network within the 6-hour time limit. Similarly, RWB executions exceeding 6 hours on large-scale networks with strict resistance distributions were terminated due to impractical runtime requirements. In contrast, both \textsc{Forest} and MIS demonstrated robust scalability, successfully processing networks with tens of millions of nodes across all distribution scenarios.
The data reveals that \textsc{Forest} consistently achieves faster execution than RWB, while MIS outperforms both algorithms by a substantial margin in all test conditions. Notably, the execution time of RWB and MIS shows considerable sensitivity to resistance coefficient distributions, whereas \textsc{Forest} maintains relatively stable performance in most cases despite distribution variations.
As shown in Table \ref{tab:data}, the excellent efficiency of our MIS algorithm can be easily applied in large-scale networks containing tens of millions of nodes, whether directed or undirected. This performance advantage makes MIS a feasible solution for large-scale network analysis tasks in the real world.

\begin{figure}
    \centering
    \includegraphics[width=1\linewidth]{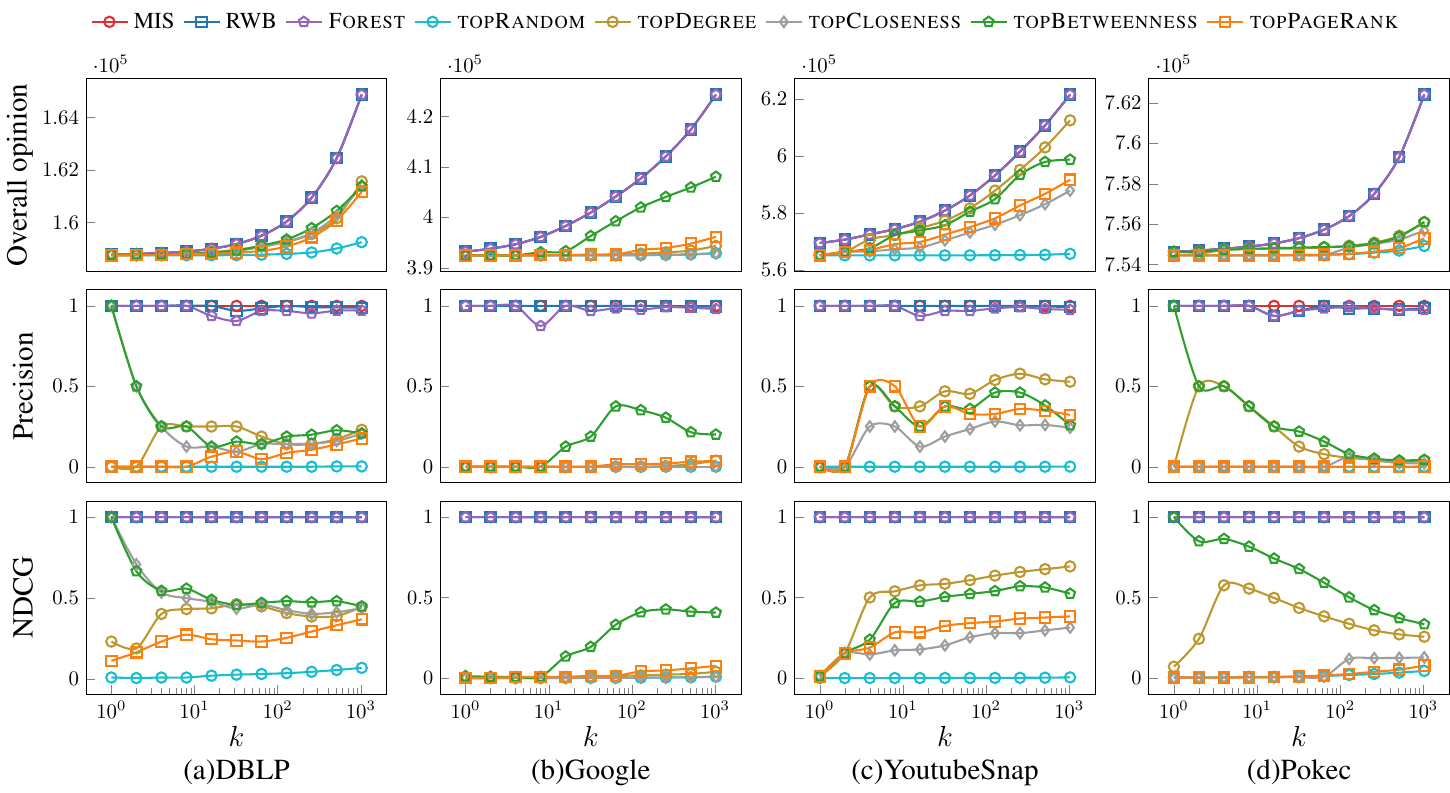}
    \caption{Performance of algorithms in accuracy across four smaller graphs.}
    \label{fig:3}
\end{figure}

\begin{figure}
    \centering
    \includegraphics[width=1\linewidth]{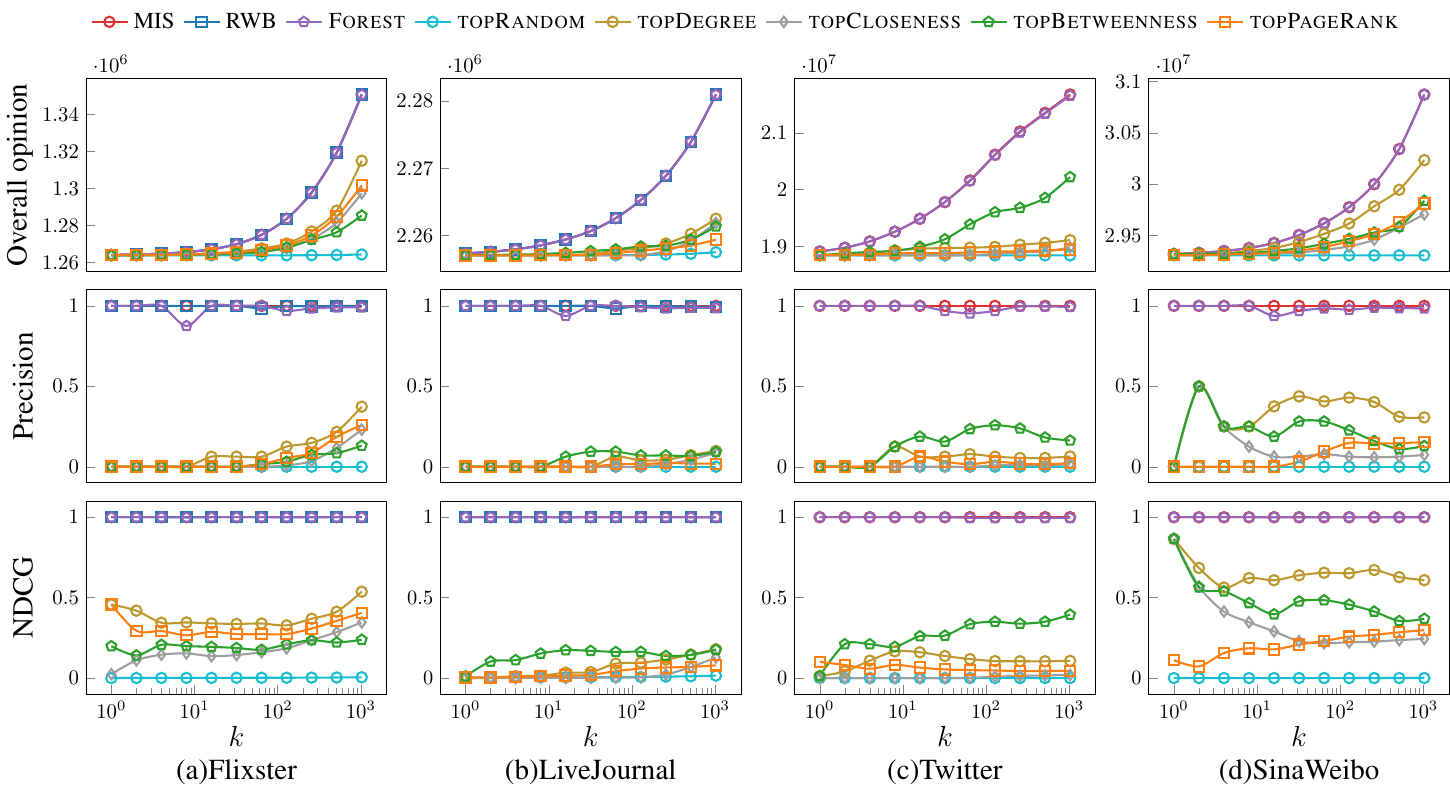}
    \caption{Performance of algorithms in accuracy across four larger graphs.}
    \label{fig:accuracy}
\end{figure}

\textbf{Accuracy.}\quad 
Figures~\ref{fig:3} and ~\ref{fig:accuracy} presents a comprehensive evaluation of opinion optimization performance, precision, and NDCG scores for all methods under uniformly distributed resistance coefficients. Due to excessively long running times, we omitted the performance of the RWB algorithm on the Twitter and SinaWeibo networks in Figure~\ref{fig:accuracy}.
Our experimental evaluation reveals that the proposed algorithms (RWB, \textsc{Forest}, and MIS) consistently outperform all baseline methods across all evaluation metrics. This superior performance demonstrates the effectiveness of our approach in opinion optimization tasks. 
The result shows that while RWB and \textsc{Forest} exhibit show measurable variations in precision under certain conditions, these fluctuations do not substantially affect their overall opinion optimization performance. The robustness of these algorithms across different network structures further confirms their practical utility.
Notably, the MIS algorithm achieves perfect computational precision (exactly 1) while maintaining NDCG scores approaching 1, indicating both optimal opinion optimization results and accurate sequence ordering. When combined with its exceptional efficiency demonstrated in Table~\ref{tab:data}, these results further underscore the superiority of the MIS algorithm in both accuracy and computational performance.

\section{Limitations}
Despite its advantages, our MIS algorithm has certain limitations. As formally established in Theorem~\ref{the:3}, the algorithm remains sensitive to resistance coefficients—particularly exhibiting prolonged runtime when these values are small. Furthermore, in exact optimization scenarios where boundary nodes contribute equally, the algorithm may encounter termination issues, though such cases are practically negligible. We note that this edge case can be effectively addressed by relaxing constraint conditions.

\section{Conclusion}
This paper proposes  novel solutions for optimizing overall opinions in social networks by modifying the internal opinions of key nodes.
As traditional matrix inversion methods face computational limitations in large-scale networks, we introduce two sampling-based algorithms. Building upon a random walk interpretation, we further develop a exact asynchronous update algorithm. This deterministic asynchronous approach provides guaranteed error bounds, leveraging asynchronous update operations and progressive refinement to efficiently and exactly identify nodes with the greatest potential influence.
Extensive experiments demonstrate that compared to baseline methods and our sampling approaches, this method achieves superior efficiency and accuracy while effectively scaling to networks with tens of millions of nodes.
Future research directions include extensions to dynamic network configurations and multi-opinion optimization scenarios.

\section*{Acknowledgments}
This work was supported by the National Natural Science Foundation of China (No. 62372112).

\newpage

\bibliographystyle{IEEETran}
\bibliography{sample-base}

\begin{thebibliography}{10}
\providecommand{\url}[1]{#1}
\csname url@samestyle\endcsname
\providecommand{\newblock}{\relax}
\providecommand{\bibinfo}[2]{#2}
\providecommand{\BIBentrySTDinterwordspacing}{\spaceskip=0pt\relax}
\providecommand{\BIBentryALTinterwordstretchfactor}{4}
\providecommand{\BIBentryALTinterwordspacing}{\spaceskip=\fontdimen2\font plus
\BIBentryALTinterwordstretchfactor\fontdimen3\font minus
  \fontdimen4\font\relax}
\providecommand{\BIBforeignlanguage}[2]{{%
\expandafter\ifx\csname l@#1\endcsname\relax
\typeout{** WARNING: IEEEtran.bst: No hyphenation pattern has been}%
\typeout{** loaded for the language `#1'. Using the pattern for}%
\typeout{** the default language instead.}%
\else
\language=\csname l@#1\endcsname
\fi
#2}}
\providecommand{\BIBdecl}{\relax}
\BIBdecl

\bibitem{ledford2020facebook}
H.~Ledford, ``How facebook, twitter and other data troves are revolutionizing
  social science,'' \emph{Nature}, vol. 582, no. 7812, pp. 328--331, 2020.

\bibitem{notarmuzi2022universality}
D.~Notarmuzi, C.~Castellano, A.~Flammini, D.~Mazzilli, and F.~Radicchi,
  ``Universality, criticality and complexity of information propagation in
  social media,'' \emph{Nature Communications}, vol.~13, no.~1, p. 1308, 2022.

\bibitem{PASI2006390}
G.~Pasi and R.~R. Yager, ``Modeling the concept of majority opinion in group
  decision making,'' \emph{Information Sciences}, vol. 176, no.~4, pp.
  390--414, 2006.

\bibitem{abebe2018opinion}
R.~Abebe, J.~Kleinberg, D.~Parkes, and C.~E. Tsourakakis, ``Opinion dynamics
  with varying susceptibility to persuasion,'' in \emph{Proceedings of the 24th
  ACM SIGKDD International Conference on Knowledge Discovery \& Data Mining},
  2018, pp. 1089--1098.

\bibitem{chan2019revisiting}
T.~H. Chan, Z.~Liang, and M.~Sozio, ``Revisiting opinion dynamics with varying
  susceptibility to persuasion via non-convex local search,'' in
  \emph{Proceedings of the World Wide Web Conference}, 2019, pp. 173--183.

\bibitem{abebe2021opinion}
R.~Abebe, T.-H.~H. Chan, J.~Kleinberg, Z.~Liang, D.~Parkes, M.~Sozio, and C.~E.
  Tsourakakis, ``Opinion dynamics optimization by varying susceptibility to
  persuasion via non-convex local search,'' \emph{ACM Transactions on Knowledge
  Discovery from Data}, vol.~16, no.~2, pp. 1--34, 2021.

\bibitem{gionis2013opinion}
A.~Gionis, E.~Terzi, and P.~Tsaparas, ``Opinion maximization in social
  networks,'' in \emph{Proceedings of the 2013 SIAM International Conference on
  Data Mining}, 2013, pp. 387--395.

\bibitem{zhang2020opinion}
Z.~Zhang, W.~Xu, Z.~Zhang, and G.~Chen, ``Opinion dynamics incorporating
  higher-order interactions,'' in \emph{2020 IEEE International Conference on
  Data Mining}, 2020, pp. 1430--1435.

\bibitem{zhu2021minimizing}
L.~Zhu, Q.~Bao, and Z.~Zhang, ``Minimizing polarization and disagreement in
  social networks via link recommendation,'' \emph{Advances in Neural
  Information Processing Systems}, vol.~34, pp. 2072--2084, 2021.

\bibitem{klausen2018finding}
J.~Klausen, C.~E. Marks, and T.~Zaman, ``Finding extremists in online social
  networks,'' \emph{Operations Research}, vol.~66, no.~4, pp. 957--976, 2018.

\bibitem{jia2015opinion}
P.~Jia, A.~MirTabatabaei, N.~E. Friedkin, and F.~Bullo, ``Opinion dynamics and
  the evolution of social power in influence networks,'' \emph{SIAM Review},
  vol.~57, no.~3, pp. 367--397, 2015.

\bibitem{dong2018survey}
Y.~Dong, M.~Zhan, G.~Kou, Z.~Ding, and H.~Liang, ``A survey on the fusion
  process in opinion dynamics,'' \emph{Information Fusion}, vol.~43, pp.
  57--65, 2018.

\bibitem{anderson2019recent}
B.~D. Anderson and M.~Ye, ``Recent advances in the modelling and analysis of
  opinion dynamics on influence networks,'' \emph{International Journal of
  Automation and Computing}, vol.~16, no.~2, pp. 129--149, 2019.

\bibitem{proskurnikov2017tutorial}
A.~V. Proskurnikov and R.~Tempo, ``A tutorial on modeling and analysis of
  dynamic social networks. part {I},'' \emph{Annual Reviews in Control},
  vol.~43, pp. 65--79, 2017.

\bibitem{hassani2022classical}
H.~Hassani, R.~Razavi-Far, M.~Saif, F.~Chiclana, O.~Krejcar, and
  E.~Herrera-Viedma, ``Classical dynamic consensus and opinion dynamics models:
  A survey of recent trends and methodologies,'' \emph{Information Fusion},
  vol.~88, pp. 22--40, 2022.

\bibitem{noorazar2020classical}
H.~Noorazar, K.~R. Vixie, A.~Talebanpour, and Y.~Hu, ``From classical to modern
  opinion dynamics,'' \emph{International Journal of Modern Physics C},
  vol.~31, no.~07, p. 2050101, 2020.

\bibitem{okawa2022predicting}
M.~Okawa and T.~Iwata, ``Predicting opinion dynamics via
  sociologically-informed neural networks,'' in \emph{Proceedings of the 28th
  ACM SIGKDD Conference on Knowledge Discovery and Data Mining}, 2022, pp.
  1306--1316.

\bibitem{de2018shaping}
A.~De, S.~Bhattacharya, and N.~Ganguly, ``Shaping opinion dynamics in social
  networks,'' in \emph{Proceedings of the 17th International Conference on
  Autonomous Agents and Multiagent Systems}, 2018, pp. 1336--1344.

\bibitem{degroot1974reaching}
M.~H. DeGroot, ``Reaching a consensus,'' \emph{Journal of the American
  Statistical Association}, vol.~69, no. 345, pp. 118--121, 1974.

\bibitem{friedkin1990social}
N.~E. Friedkin and E.~C. Johnsen, ``Social influence and opinions,''
  \emph{Journal of Mathematical Sociology}, vol.~15, no. 3-4, pp. 193--206,
  1990.

\bibitem{he2020opinion}
G.~He, W.~Zhang, J.~Liu, and H.~Ruan, ``Opinion dynamics with the increasing
  peer pressure and prejudice on the signed graph,'' \emph{Nonlinear Dynamics},
  vol.~99, pp. 3421--3433, 2020.

\bibitem{chitra2020analyzing}
U.~Chitra and C.~Musco, ``Analyzing the impact of filter bubbles on social
  network polarization,'' in \emph{Proceedings of the 13th International
  Conference on Web Search and Data Mining}, 2020, pp. 115--123.

\bibitem{semonsen2018opinion}
J.~Semonsen, C.~Griffin, A.~Squicciarini, and S.~Rajtmajer, ``Opinion dynamics
  in the presence of increasing agreement pressure,'' \emph{IEEE Transactions
  on Cybernetics}, vol.~49, no.~4, pp. 1270--1278, 2018.

\bibitem{das2013debiasing}
A.~Das, S.~Gollapudi, R.~Panigrahy, and M.~Salek, ``Debiasing social wisdom,''
  in \emph{Proceedings of the 19th ACM SIGKDD International Conference on
  Knowledge Discovery and Data Mining}, 2013, pp. 500--508.

\bibitem{bindel2015bad}
D.~Bindel, J.~Kleinberg, and S.~Oren, ``How bad is forming your own opinion?''
  \emph{Games and Economic Behavior}, vol.~92, pp. 248--265, 2015.

\bibitem{wang2025efficient}
G.~Wang, R.~Zhang, and Z.~Zhang, ``Efficient algorithms for relevant quantities
  of friedkin-johnsen opinion dynamics model,'' in \emph{Proceedings of the
  31st ACM SIGKDD Conference on Knowledge Discovery and Data Mining V. 2},
  2025, pp. 2915--2926.

\bibitem{ravazzi2014ergodic}
C.~Ravazzi, P.~Frasca, R.~Tempo, and H.~Ishii, ``Ergodic randomized algorithms
  and dynamics over networks,'' \emph{IEEE Transactions on Control of Network
  Systems}, vol.~2, no.~1, pp. 78--87, 2014.

\bibitem{ghaderi2014opinion}
J.~Ghaderi and R.~Srikant, ``Opinion dynamics in social networks with stubborn
  agents: Equilibrium and convergence rate,'' \emph{Automatica}, vol.~50,
  no.~12, pp. 3209--3215, 2014.

\bibitem{yi2021shifting}
Y.~Yi, T.~Castiglia, and S.~Patterson, ``Shifting opinions in a social network
  through leader selection,'' \emph{IEEE Transactions on Control of Network
  Systems}, vol.~8, no.~3, pp. 1116--1127, 2021.

\bibitem{sun2023opinion}
H.~Sun and Z.~Zhang, ``Opinion optimization in directed social networks,'' in
  \emph{Proceedings of the AAAI Conference on Artificial Intelligence},
  vol.~37, no.~4, 2023, pp. 4623--4632.

\bibitem{tu2020co}
S.~Tu, C.~Aslay, and A.~Gionis, ``Co-exposure maximization in online social
  networks,'' \emph{Advances in Neural Information Processing Systems},
  vol.~33, pp. 3232--3243, 2020.

\bibitem{garimella2017balancing}
K.~Garimella, A.~Gionis, N.~Parotsidis, and N.~Tatti, ``Balancing information
  exposure in social networks,'' \emph{Advances in Neural Information
  Processing Systems}, vol.~30, 2017.

\bibitem{mackin2019maximizing}
E.~Mackin and S.~Patterson, ``Maximizing diversity of opinion in social
  networks,'' in \emph{2019 American Control Conference}, 2019, pp. 2728--2734.

\bibitem{matakos2020tell}
A.~Matakos, S.~Tu, and A.~Gionis, ``Tell me something my friends do not know:
  Diversity maximization in social networks,'' \emph{Knowledge and Information
  Systems}, vol.~62, pp. 3697--3726, 2020.

\bibitem{chen2018quantifying}
X.~Chen, J.~Lijffijt, and T.~De~Bie, ``Quantifying and minimizing risk of
  conflict in social networks,'' in \emph{Proceedings of the 24th ACM SIGKDD
  International Conference on Knowledge Discovery \& Data Mining}, 2018, pp.
  1197--1205.

\bibitem{wang2023relationship}
Y.~Wang and J.~Kleinberg, ``On the relationship between relevance and conflict
  in online social link recommendations,'' \emph{Advances in Neural Information
  Processing Systems}, vol.~36, pp. 36\,708--36\,725, 2023.

\bibitem{gaitonde2020adversarial}
J.~Gaitonde, J.~Kleinberg, and E.~Tardos, ``Adversarial perturbations of
  opinion dynamics in networks,'' in \emph{Proceedings of the 21st ACM
  Conference on Economics and Computation}, 2020, pp. 471--472.

\bibitem{musco2018minimizing}
C.~Musco, C.~Musco, and C.~E. Tsourakakis, ``Minimizing polarization and
  disagreement in social networks,'' in \emph{Proceedings of the 2018 World
  Wide Web Conference}, 2018, pp. 369--378.

\bibitem{matakos2017measuring}
A.~Matakos, E.~Terzi, and P.~Tsaparas, ``Measuring and moderating opinion
  polarization in social networks,'' \emph{Data Mining and Knowledge
  Discovery}, vol.~31, pp. 1480--1505, 2017.

\bibitem{friedkin1999social}
N.~E. Friedkin and E.~C. Johnsen, ``Social influence networks and opinion
  change,'' \emph{Advances in group processes}, vol.~16, no.~1, pp. 1--29,
  1999.

\bibitem{ahmadinejad2015forming}
A.~Ahmadinejad, S.~Dehghani, M.~Hajiaghayi, H.~Mahini, S.~Seddighin, and
  S.~Yazdanbod, ``Forming external behaviors by leveraging internal opinions,''
  in \emph{2015 IEEE Conference on Computer Communications}, 2015, pp.
  1849--1857.

\bibitem{friedkin2011formal}
N.~E. Friedkin, ``A formal theory of reflected appraisals in the evolution of
  power,'' \emph{Administrative Science Quarterly}, vol.~56, no.~4, pp.
  501--529, 2011.

\bibitem{xu2022effects}
W.~Xu, L.~Zhu, J.~Guan, Z.~Zhang, and Z.~Zhang, ``Effects of stubbornness on
  opinion dynamics,'' in \emph{Proceedings of the 31st ACM International
  Conference on Information \& Knowledge Management}, 2022, pp. 2321--2330.

\bibitem{chebotarev2006matrix}
P.~Chebotarev and E.~Shamis, ``Matrix-forest theorems,'' \emph{arXiv preprint
  math/0602575}, 2006.

\bibitem{kunegis2013konect}
J.~Kunegis, ``Konect: the koblenz network collection,'' in \emph{Proceedings of
  the 22nd International Conference on World Wide Web}, 2013, pp. 1343--1350.

\bibitem{leskovec2016snap}
J.~Leskovec and R.~Sosi{\v{c}}, ``Snap: A general-purpose network analysis and
  graph-mining library,'' \emph{ACM Transactions on Intelligent Systems and
  Technology}, vol.~8, no.~1, pp. 1--20, 2016.

\bibitem{rossi2015network}
R.~Rossi and N.~Ahmed, ``The network data repository with interactive graph
  analytics and visualization,'' \emph{Proceedings of the AAAI Conference on
  Artificial Intelligence}, vol.~29, no.~1, 2015.

\bibitem{jarvelin2017ir}
K.~J{\"a}rvelin and J.~Kek{\"a}l{\"a}inen, ``Ir evaluation methods for
  retrieving highly relevant documents,'' in \emph{ACM SIGIR Forum}, vol.~51,
  no.~2, 2017, pp. 243--250.

\bibitem{vassio2014message}
L.~Vassio, F.~Fagnani, P.~Frasca, and A.~Ozdaglar, ``Message passing
  optimization of harmonic influence centrality,'' \emph{IEEE Transactions on
  Control of Network Systems}, vol.~1, no.~1, pp. 109--120, 2014.

\bibitem{agaev2001spanning}
R.~P. Agaev and P.~Y. Chebotarev, ``Spanning forests of a digraph and their
  applications,'' \emph{Automation and Remote Control}, vol.~62, pp. 443--466,
  2001.

\bibitem{chebotarev2002forest}
P.~Chebotarev and R.~Agaev, ``Forest matrices around the laplacian matrix,''
  \emph{Linear Algebra and its Applications}, vol. 356, no. 1-3, pp. 253--274,
  2002.

\bibitem{wilson1996generating}
D.~B. Wilson, ``Generating random spanning trees more quickly than the cover
  time,'' in \emph{Proceedings of the twenty-eighth annual ACM Symposium on
  Theory of Computing}, 1996, pp. 296--303.

\bibitem{avena2018two}
L.~Avena and A.~Gaudilli{\`e}re, ``Two applications of random spanning
  forests,'' \emph{Journal of Theoretical Probability}, vol.~31, no.~4, pp.
  1975--2004, 2018.

\end{thebibliography}

\newpage
\appendix
\begin{center}
{\Large \textbf{Appendix}}
\end{center}

\section{Omitted Proofs}\label{app:proof}

\subsection{Proof of Lemma~\ref{lem:1}}

According to Neumann series, we have $\MM = (\II - (\II - \RR)\PP)^{-1} \RR = \sum_{t=0}^\infty \left((\II - \RR)\PP\right)^t \RR$. Let $\pp_i = \ee_i^\top \sum_{t=0}^\infty \left((\II - \RR)\PP\right)^t \RR$ represent the absorption probability vector of absorbing random walks starting from node $i$. Hence, we have:
$$\rho_v  = \sum_{i\in V}\MM(i,v) = \sum_{i\in V} \ee_i^\top \sum_{t=0}^\infty \left((\II - \RR)\PP\right)^t \RR \ee_v = \sum_{i\in V}\pp_i(v).$$

\subsection{Proof of Lemma~\ref{lem:2}}

Algorithm RWB simulates $N$ independent runs of the absorbing random walk $\{X_i\}_{i=1}^N$, where:
\begin{equation*}
    X_{i}=\begin{cases}
    n, & \text{if the }i\text{-th walk is absorbed by }v,\\
    0, & \text{otherwise}.
    \end{cases}
\end{equation*}

The estimator $\hat{\rho}_v = \frac{1}{N}\sum_{i=1}^N X_i$ has the following properties:
\begin{align*}
    \Expect[\hat{\rho}_v] &= \frac{1}{N}\sum_{i=1}^N\mathbb{E}[X_i] =  n \cdot\frac{1}{n}\sum_{u \in V} \pp_u(v) = \rho_v.\\
    \Vari[\hat{\rho}_v] &= \frac{1}{N^2}\sum_{i=1}^N\mathrm{Var}[X_i] = \frac{1}{N^2}\sum_{i=1}^N\left(\Expect[X_i^2] - \Expect[X_i]^2\right) \\
    &= \frac{1}{N^2}\sum_{i=1}^N\left(n^2\cdot\frac{\rho_v}{n} - \rho_v^2\right)
    = \frac{\rho_v(n-\rho_v)}{N} \leq \frac{n^2}{4N}.
\end{align*}

We now apply the following Chernoff bound for bounded variables.
\begin{lemma}[Chernoff Bound]
    Let $X_{i}(1\leq i\leq N)$ be independent random variables satisfying $X_{i}\leq E[X_{i}]+M$ for $1\leq i\leq N$. Let $X=\frac{1}{N}\sum_{i=1}^{N}X_{i}$. Assume that $E[X]$ and $Var[X]$ are respectively the expectation and variance of $X$. Then we have
    \begin{equation*}
        \Pr(|X-E[X]|\geq\lambda)\leq2\exp(-\frac{\lambda^{2}N}{2Var[X]+2M\lambda/3}).
    \end{equation*}
\end{lemma}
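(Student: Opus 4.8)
The plan is to prove this by the standard Chernoff (exponential-moment) method, the route that produces Bernstein-type tail bounds. First I would reduce the two-sided statement to a one-sided one: since $\Pr(|X-E[X]|\ge\lambda)\le \Pr(X-E[X]\ge\lambda)+\Pr(X-E[X]\le-\lambda)$, it suffices to bound each tail by $\exp\bigl(-\lambda^2N/(2\,\mathrm{Var}[X]+2M\lambda/3)\bigr)$ and collect the factor of $2$. Introducing the centered variables $Y_i=X_i-E[X_i]$, we have $E[Y_i]=0$, $Y_i\le M$, and $X-E[X]=\frac1N\sum_i Y_i$, so $\{X-E[X]\ge\lambda\}=\{S\ge N\lambda\}$ with $S=\sum_i Y_i$.

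For the upper tail I would apply Markov's inequality to $e^{tS}$ with a free parameter $t>0$ and factor $E[e^{tS}]=\prod_i E[e^{tY_i}]$ by independence, giving $\Pr(S\ge N\lambda)\le e^{-tN\lambda}\prod_i E[e^{tY_i}]$. The crux is a per-term moment-generating-function bound: for a centered $Y\le M$ and $0<t<3/M$, one has $E[e^{tY}]\le\exp\bigl(\tfrac{t^2\mathrm{Var}[Y]/2}{1-tM/3}\bigr)$. I would obtain this from the expansion $E[e^{tY}]=1+\sum_{k\ge2}t^kE[Y^k]/k!$, the moment estimate $E[Y^k]\le\mathrm{Var}[Y]\,M^{k-2}$ for $k\ge2$, the factorial bound $k!\ge 2\cdot3^{k-2}$ (which sums the tail to $1/(1-tM/3)$), and finally $1+x\le e^x$. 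Multiplying over $i$ and writing $v=\sum_i\mathrm{Var}[Y_i]$ yields $\Pr(S\ge N\lambda)\le\exp\bigl(-tN\lambda+\tfrac{t^2v/2}{1-tM/3}\bigr)$.

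The last step is to optimize over $t$. Choosing $t=N\lambda/(v+MN\lambda/3)$, which satisfies $tM<3$, collapses the exponent to $-\tfrac{(N\lambda)^2}{2(v+MN\lambda/3)}$, the classical Bernstein form. Substituting the normalization $v=\sum_i\mathrm{Var}[Y_i]=N^2\mathrm{Var}[X]$ and cancelling a factor of $N$ reproduces the stated exponent $-\lambda^2N/(2\,\mathrm{Var}[X]+2M\lambda/3)$; the identical argument for the lower tail then supplies the claimed factor of $2$.

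The main obstacle is the per-term MGF bound and its interaction with the two-sided claim. The moment estimate $E[Y^k]\le\mathrm{Var}[Y]\,M^{k-2}$ genuinely uses $|Y|\le M$ rather than merely $Y\le M$, and the lower tail is not controlled by the one-sided hypothesis alone, since applying the upper-tail argument to $-Y_i$ requires $-Y_i\le M$. I would therefore read the hypothesis as the two-sided deviation bound $|X_i-E[X_i]|\le M$, which holds in every application in this paper (where the $X_i$ lie in $[0,n]$), and note that the only delicate analytic point is verifying the series/monotonicity inequality for the MGF; everything else is routine bookkeeping of the $N$-normalization needed to land on the exact denominator as written.
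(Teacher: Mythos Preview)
The paper does not prove this lemma; it is quoted without proof as a known concentration inequality (Bernstein's inequality) inside the proof of Lemma~\ref{lem:2} and then applied. Your outline is the standard textbook derivation and is the right route, including your caveat that the one-sided hypothesis $X_i\le E[X_i]+M$ is really being used as the two-sided $|X_i-E[X_i]|\le M$.

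There is, however, a concrete gap in your last step. From $-\tfrac{(N\lambda)^2}{2(v+MN\lambda/3)}$ with $v=\sum_i\mathrm{Var}[Y_i]=N^2\,\mathrm{Var}[X]$, cancelling one factor of $N$ gives
\[
-\frac{N\lambda^2}{2N\,\mathrm{Var}[X]+2M\lambda/3},
\]
not the stated $-\lambda^2N/(2\,\mathrm{Var}[X]+2M\lambda/3)$: the variance term in the denominator retains a factor of $N$. This is not a flaw in your argument but in the lemma as written---with $\mathrm{Var}[X]$ defined as the variance of the \emph{average} $\frac1N\sum_iX_i$, the displayed bound is actually false (try i.i.d.\ symmetric Bernoulli variables at $\lambda$ of order $1/\sqrt{N}$; the bound would force the tail to vanish while the CLT keeps it bounded away from zero). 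The inequality becomes correct if $\mathrm{Var}[X]$ is read as $\tfrac1N\sum_i\mathrm{Var}[X_i]=N\,\mathrm{Var}\bigl[\tfrac1N\sum_iX_i\bigr]$, or equivalently if $X$ denotes the sum rather than the average; your derivation proves exactly that corrected statement. So rather than asserting that the substitution ``reproduces the stated exponent,'' you should flag the discrepancy and record the version your computation actually yields.
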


Applying this lemma with $M = n$ and $\lambda = \epsilon$, we obtain
\begin{align*}
    \Pr\left(|\hat{\rho}_v-\rho_v|\geq\epsilon\right) 
    \leq 2\exp\left(-\frac{\epsilon^2 N}{2\cdot\frac{n^2}{4N} + 2n\epsilon/3}\right) = 2\exp\left(-\frac{\epsilon^2 N}{\frac{n^2}{2N} + 2n\epsilon/3}\right).
\end{align*}

When $N = O\left(\frac{n}{\epsilon^2}\log n\right)$, we have
    $\Pr\left(|\hat{\rho}_v-\rho_v|\geq\epsilon\right) \leq \frac{1}{n}$.

\subsection{Proof of Theorem~\ref{the:1}}
We first prove the expected length of a random walk. Since $\sum_{l=1}^\infty l \cdot x^{l-1} = \frac{1}{(1-x)^2}$ for $|x|<1$, we have
\begin{align*}
    \Expect[\text{Walk length starting from node}\ u] &= \sum_{l = 0}^\infty l\cdot \Pr(\text{Walk length starting from node}\ u\ \text{is}\ l)\\
    &=\sum_{l = 0}^\infty l \cdot \ee_u^\top \left((\II - \RR)\PP\right)^l \RR \one
    \le \sum_{l=0}^\infty l(1-\alpha_{\min})^l\alpha_{\max}\\
    & = \alpha_{\max}(1-\alpha_{\min})\sum_{l=1}^\infty l(1-\alpha_{\min})^{l-1}\\
    &= \frac{\alpha_{\max}(1-\alpha_{\min})}{\alpha_{\min}^2}.
\end{align*}
According to Lemma~\ref{alg:2}, we have $N = \frac{n}{\epsilon^2}\log n$, hence the upper bound of time complexity is $O(\frac{\alpha_{\max}(1-\alpha_{\min})}{\epsilon^2 \alpha_{\min}^2} \cdot n \log n)$.

\subsection{Proof of Lemma~\ref{lem:forest}}
\label{proof:forest}
We begin by establishing the representation $\MM = (\QQ+\LL)^{-1}\QQ$. Starting from the definition of $\MM$, we derive
\begin{align*}
    \MM &= (\II - (\II -\RR)\PP)^{-1}\RR = (\II - (\II -\RR)\DD^{-1}\AA)^{-1}\RR= (\DD(\II-\RR)^{-1} - \AA)^{-1}\DD(\II-\RR)^{-1}\RR \\
    &=  (\DD\sum_{i = 0}^{\infty}\RR^{i} - \AA)^{-1}\QQ= (\DD\sum_{i = 0}^{\infty}\RR^i\cdot\RR +\DD-\AA )^{-1}\QQ\\
    &= (\DD(\II-\RR)^{-1}\RR +\LL )^{-1}\QQ =(\QQ+\LL)^{-1}\QQ.
\end{align*}

To prove Lemma~\ref{lem:forest}, we invoke a result from \cite{chebotarev2006matrix} concerning spanning converging forests:
\begin{lemma}[\cite{chebotarev2006matrix}]
    Let $\LL_{-S}$ be the matrix obtained from $\LL$ by deleting the rows and columns corresponding to the nodes in $S\subseteq V$, and let $\mathcal{F}_{S}$ be the set of spanning converging forests of graph $\calG$ with $|S|$ components that diverge from the nodes of $S$. Then, the determinant $\det(\LL_{-S}) = \sum_{F\in \mathcal{F}_S}w(F)$, where $w(F)$ represents the product of edge weights in forest $F$. Furthermore, for any nodes $i,j\in V\setminus S$, the $(i,j)$-cofactor $\LL^{ij}_{-S} = \sum_{F\in \mathcal{F}^{ij}_{S\cup \{i\}}}w(F)$.
\end{lemma}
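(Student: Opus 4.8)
This is the all-minors (``forest'') version of the directed Matrix--Tree Theorem, and the proof I would give is the standard row-multilinearity argument. The first step is the edge decomposition of the rows of the reduced Laplacian: since $\DD=\Diag(d^+_1,\dots,d^+_n)$ records \emph{out}-degrees, for every $i\in V\setminus S$ the $i$-th row of $\LL_{-S}$ equals $\sum_{j\in N_{\text{out}}(i)}w_{ij}\,(\ee_i-\ee_j)$, where $w_{ij}$ is the weight of edge $(i,j)$ (just $a_{i,j}$ in the unweighted case) and the vectors are read as restricted to the coordinate set $V\setminus S$, so that $\ee_j$ drops out whenever $j\in S$. Expanding $\det(\LL_{-S})$ multilinearly in its $|V\setminus S|$ rows then gives $\det(\LL_{-S})=\sum_{\sigma}\bigl(\prod_{i\in V\setminus S}w_{i\sigma(i)}\bigr)\det(E_\sigma)$, where $\sigma$ ranges over all assignments of one out-neighbour $\sigma(i)$ to each $i\notin S$ and $E_\sigma$ is the $0/\pm 1$ matrix whose $i$-th row is $(\ee_i-\ee_{\sigma(i)})|_{V\setminus S}$.

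The second step evaluates $\det(E_\sigma)$. If the functional digraph $\{(i,\sigma(i)):i\notin S\}$ contains a directed cycle lying entirely in $V\setminus S$, the rows of $E_\sigma$ indexed by that cycle sum to zero (the basis vectors telescope), so $\det(E_\sigma)=0$. Otherwise every $\sigma$-trajectory exits $V\setminus S$ into $S$, so the chosen edges form a spanning converging forest $F$ of $\calG$ with exactly $|S|$ trees, one rooted at each node of $S$, i.e.\ $F\in\mathcal{F}_S$; listing the rows and columns of $E_\sigma$ in a topological order of $F$ (each node before its parent) makes $E_\sigma$ upper-triangular with all diagonal entries $+1$, so $\det(E_\sigma)=1$. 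Since $\prod_i w_{i\sigma(i)}=w(F)$, summing over the surviving $\sigma$ yields $\det(\LL_{-S})=\sum_{F\in\mathcal{F}_S}w(F)$.

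For the cofactor identity I would first use the Cramer-style observation that $\LL^{ij}_{-S}$ equals the determinant of the matrix $B$ obtained from $\LL_{-S}$ by overwriting its $j$-th column with $\ee_i$ (Laplace-expand $\det B$ along that column). Running the same row-multilinearity expansion on $B$, the altered column removes from each decomposition the option of ``$\ee_j$'' as a parent pointer and instead forces node $i$ to act as an additional root toward which $j$'s trajectory must flow; the $\sigma$'s with nonzero contribution are then exactly the forests in $\mathcal{F}^{ij}_{S\cup\{i\}}$, each contributing $+w(F)$ by the same triangularization, giving $\LL^{ij}_{-S}=\sum_{F\in\mathcal{F}^{ij}_{S\cup\{i\}}}w(F)$. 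Applied to the graph $\calG$ augmented by a sink $\zeta$ together with an edge $u\to\zeta$ of weight $Q(u,u)$ for each $u\in V$ --- so that deleting $\zeta$'s row and column from the augmented Laplacian leaves exactly $\QQ+\LL$ --- these two formulas with $S=\{\zeta\}$ are precisely what is needed to finish the proof of Lemma~\ref{lem:forest}: $\det(\QQ+\LL)$ becomes the weighted forest sum in the denominator, and the $(i,j)$-cofactor, multiplied by $Q(i,i)$ coming from $\MM=(\QQ+\LL)^{-1}\QQ$, becomes the numerator after summing over $j$.

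The step I expect to be the main obstacle is the sign bookkeeping: one must check that \emph{every} admissible $\sigma$ contributes $+w(F)$ rather than $-w(F)$, which is why the topological-order triangularization is the crux --- it has to be carried out consistently across all trees of the forest simultaneously and, in the cofactor case, compatibly with the permutation that produces the $(-1)^{\mathrm{pos}(i)+\mathrm{pos}(j)}$ sign hidden inside the cofactor, so that the two sign contributions cancel exactly. Secondary technical points are pinning down the orientation convention (the out-degree Laplacian $\LL=\DD-\AA$ pairs with \emph{converging}/in-trees, consistent with the rest of the paper) and handling degenerate configurations such as isolated roots in $S$ or the case $i=j$.
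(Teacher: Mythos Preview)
Your proof sketch is correct --- this is the standard multilinearity/functional-digraph proof of the all-minors directed Matrix--Tree Theorem, and the sign bookkeeping you flag as the delicate point is indeed handled by the simultaneous row-and-column topological reordering (conjugation by a permutation matrix preserves the determinant, so once the matrix is upper-triangular with unit diagonal you are done). The cofactor argument via the Cramer substitution is also fine, and your closing remark about the augmented sink $\zeta$ correctly anticipates how the lemma feeds into Lemma~\ref{lem:forest}.

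That said, the paper does not prove this statement at all: it is quoted verbatim as a known result from \cite{chebotarev2006matrix} and then \emph{applied} to compute $\det(\LL_q)$ and $\det(\LL_q^{ij})$ in the proof of Lemma~\ref{lem:forest}. So there is no ``paper's own proof'' to compare against; you have supplied a self-contained argument where the authors simply invoke the literature. What your approach buys is a from-scratch derivation that a reader without access to Chebotarev's paper can follow; what the paper's approach buys is brevity.
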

Let $\LL_q = \MM^{-1} = \QQ^{-1}(\LL+\QQ)$, We analyze $\det(\LL _q)$ as follows:
\begin{align*}
    \det(\LL_q) &= \det(\QQ^{-1})\det(\LL +\QQ) = \frac{1}{\prod_{v\in V}Q(v,v)}\det(\LL+\QQ)\\
    &=\frac{1}{\prod_{v\in V}Q(v,v)} \sum_{t= 0}^n\sum_{\substack{S\subseteq V\\ |S|=t}}\det(\LL_{-S})\prod_{u \in  S} \QQ(u,u)\\
    &=\frac{1}{\prod_{v\in V}Q(v,v)} \sum_{t= 0}^n\sum_{\substack{S\subseteq V\\ |S|=t}}\sum_{F\in \mathcal{F}_S}w(F)\prod_{u \in  S} \QQ(u,u)\\
    &= \frac{1}{\prod_{v\in V}Q(v,v)} \sum_{S\subseteq V}\sum_{F\in \mathcal{F}_S}w(F)\prod_{u \in  S} \QQ(u,u)\\
    &=  \frac{1}{\prod_{v\in V}Q(v,v)} \sum_{F\in \mathcal{F}}w(F)\prod_{u \in r(F)} \QQ(u,u).
\end{align*}
For the (i,j)-cofactor $\LL^{ij}_q$, we similarly obtain
\begin{align*}
    \det(\LL^{ij}_q) &= \det((\QQ^{-1})^{ii})\det((\LL +\QQ)^{ij}) = \frac{Q(i,i)}{\prod_{v\in V}Q(v,v)}\det((\LL +\QQ)^{ij})\\
    &=\frac{Q(i,i)}{\prod_{v\in V}Q(v,v)} \sum_{t= 0}^{n-1}\sum_{\substack{S\subseteq V\setminus\{i,j\}\\ |S|=t}}\det(\LL^{ij}_{-S})\prod_{u \in  S} \QQ(u,u)\\
    &=\frac{Q(i,i)}{\prod_{v\in V}Q(v,v)} \sum_{t= 0}^{n-1}\sum_{\substack{S\subseteq V\setminus\{i,j\}\\ |S|=t}}\sum_{F\in \mathcal{F}^{ij}_{S\cup \{i\}}}w(F)\prod_{u \in  S} \QQ(u,u)\\
    &= \frac{Q(i,i)}{\prod_{v\in V}Q(v,v)} \sum_{S\subseteq V\setminus\{i,j\}}\sum_{F\in \mathcal{F}^{ij}_{S\cup \{i\}}}w(F)\prod_{u \in  S} \QQ(u,u)\\
    &= \frac{1}{\prod_{v\in V}Q(v,v)} \sum_{F\in \mathcal{F}^{ij}}w(F)\prod_{u \in r(F)} \QQ(u,u).
\end{align*}

Since all edge weights in $\calG$ are $1$, the inverse $\LL^{-1}_q$ simplifies to
$$\LL^{-1}_q(i,j) = \frac{\det(\LL^{ji}_q)}{\det(\LL_q)} = \frac{\sum_{F\in \mathcal{F}^{ji}}\prod_{u \in r(F)}Q(u,u)}{\sum_{F\in \mathcal{F}}\prod_{u\in r(F)}Q(u,u)}.$$

Finally, as $\rho_i = \sum_{v\in V}\MM(v,i) = \sum_{v\in V} \LL^{-1}_q(v,i)$, the expression for node $i$'s structual centrality $\rho_i$ follows
\begin{equation*}
    \rho_i = \sum_{j\in V}\frac{\sum_{F\in \mathcal{F}^{ij}}\prod_{u \in r(F)}Q(u,u)}{\sum_{F\in \mathcal{F}}\prod_{u\in r(F)}Q(u,u)}.
\end{equation*}
This completes the proof of Lemma~\ref{lem:forest}.

\subsection{Proof of Theorem~\ref{the:forest}}
The time complexity of the algorithm depends on the times nodes are visited in loop-erased random walks. Thus, the time complexity of $\textsc{Forest}$ is
\begin{align*}
    l\cdot \sum_{v\in V} \sum_{i=0}^\infty ((\II-\RR)\PP)^i(v,v) & = l\cdot \text{Tr}((\II-(\II-\RR)\PP)^{-1}) = l\cdot \text{Tr}(\MM \RR^{-1})\le \frac{l}{\alpha_{\min}}\text{Tr}(\MM).
\end{align*}
Since $\MM(i,j) \le 1$ for any $i, j \in V$, the final time complexity is obtained as $O(\frac{1}{\alpha_{min}}ln)$.

\subsection{Proof of Corollary~\ref{cor:1}}
Define the \textit{ground-truth utility} of a set $ T $ as $ U(T) = \sum_{i \in T} \rho_i (1 - s_i) $ and the \textit{estimated utility} as $ \hat{U}(T) = \sum_{i \in T} \hat{\rho}_i (1 - s_i) $. 

By the error condition $ |\rho_i - \hat{\rho}_i| \leq \epsilon $, for any set $ T $ with $ |T| = k $, we have:
$$
\left| \hat{U}(T) - U(T) \right| = \left| \sum_{i \in T} (\hat{\rho}_i - \rho_i)(1 - s_i) \right| \leq \sum_{i \in T} \epsilon (1 - s_i) \leq k\epsilon,
$$
where the last inequality holds because $ 0 \leq 1 - s_i \leq 1 $. This implies
\begin{equation}
U(T) + k\epsilon \ge \hat{U}(T) \ge U(T) - k\epsilon. \label{eq:bounds}
\end{equation}

Since we select $ \hat{T} $ to maximize $ \hat{U}(T) $, it satisfies
\begin{equation}
\hat{U}(\hat{T}) \geq \hat{U}(T^*) \geq U(T^*) - k\epsilon, \label{eq:alg_optimality}
\end{equation}

Combining (\ref{eq:bounds}) and (\ref{eq:alg_optimality}), we bound $ U(\hat{T}) $ as:
$$
U(\hat{T}) \geq \hat{U}(\hat{T}) - k\epsilon \geq \left( U(T^*) - k\epsilon \right) - k\epsilon = U(T^*) - 2k\epsilon.
$$

Thus, $ \sum_{i \in \hat{T}} \rho_i (1 - s_i) \geq \sum_{i \in T^*} \rho_i (1 - s_i) - 2k\epsilon $, as required.

\subsection{Proof of Lemma~\ref{lem:3}}
We demonstrate that the invariant holds by using induction. First, we verify that before any computation has begun, the invariant is satisfied by the initialized values:
    \begin{equation*}
        \DDelta - \hat{\DDelta}^{(0)} = \DDelta - \zero = (\one - \sss)  \odot \MM^\top \one.
    \end{equation*}
    Let $\rr_a^{(t)}$ denote the residual vector before an update task for any node $i$, and let $\rr_a^{(t+1)}$ denote the residual vector after the update task. Then a single update task corresponds to the following steps:
    \begin{align*}
        \hat{\DDelta}^{(t+1)} &= \hat{\DDelta}^{(t)} + \rr_a^{(t)}(i)(\one-\sss)\odot \RR\ee_i ,\\
        \rr_a^{(t+1)} &= \rr_a^{(t)} -  \rr_a^{(t)}(i) \ee_i  +  \rr_a^{(t)}(i) \PP^\top (\II - \RR) \ee_i 
        = \rr_a^{(t)} - \rr_a^{(t)}(i) (\II - \PP^\top (\II - \RR))\ee_i
    \end{align*}
    Assuming that $\DDelta - \hat{\DDelta}^{(t)} = (\one - \sss)  \odot \MM^\top\rr_a^{(t)}$, it follows that
    \begin{align*}
        \DDelta - \hat{\DDelta}^{(t+1)} &= \DDelta - \hat{\DDelta}^{(t)} - \ee^\top_i(\one-\sss)\odot \RR \rr_a^{(t)} \\
        &=(\one - \sss)  \odot \MM^\top\rr_a^{(t)} - \rr_a^{(t)}(i)(\one-\sss)\odot \RR\ee_i\\
        &= (\one - \sss)  \odot \MM^\top\left(\rr_a^{(t+1)} + \rr^{(t)}(i) (\II - \PP^\top (\II - \RR))\ee_i\right) - \rr_a^{(t)}(i)(\one-\sss)\odot \RR\ee_i\\
        &= (\one - \sss)  \odot \MM^\top\rr_a^{(t+1)}, 
    \end{align*}
    which completes the proof.

\subsection{Proof of Lemma~\ref{lem:4}}
    On the one hand, since $\rr_a \ge \zero$, by Lemma~\ref{lem:3}, we have $\DDelta(v) \ge \hat{\DDelta}(v), \forall v \in V$; on the other hand, the condition in Line 2 of Algorithm~\ref{alg:2} indicates that, after termination, the vector $\rr_a \le \epsilon \one$, then we have that:
    \begin{equation}
        \DDelta - \hat{\DDelta} =(\one - \sss)  \odot \MM^\top\rr_a  \le \epsilon(\one - \sss)  \odot \MM^\top\one  = \epsilon  \DDelta.
    \end{equation}
    The equation above implies that $(1-\epsilon)\DDelta(v) \le \hat{\DDelta}(v), \forall v \in V$, hence the estimator $\hat{\DDelta}$ returned by Algorithm~\ref{alg:2} satisfies $(1-\epsilon)\DDelta(v) \le \hat{\DDelta}(v) \le \DDelta(v), \forall v \in V$.

\subsection{Proof of Lemma~\ref{lem:5}}
We demonstrate that the invariant holds by using induction. First, we verify that before any computation has begun, the invariant is satisfied by the initialized values for any node $v \in V$:
    \begin{equation*}
        \DDelta(v) - (\hat{\DDelta}(v) + \tilde{\Delta}) = \DDelta(v) - \hat{\DDelta}(v) = (1 - \sss(v)) \cdot \ee_v^\top \MM^\top \rr_a = \rr_a^\top \MM\RR^{-1}\rr^{0}.
    \end{equation*}
    Let $\rr_s^{(t)}$ denote the residual vector before an update task for any node $u$, and let $\rr_s^{(t+1)}$ denote the residual vector after the update task. Then a single update task corresponds to the following steps:
    \begin{align*}
        \tilde{\Delta}^{(t+1)} &= \tilde{\Delta}^{(t)} + \rr_a(u)\cdot\rr_s^{(t)}(u) ,\\
        \rr_s^{(t+1)} &= \rr_s^{(t)} -  \rr_s^{(t)}(u) \ee_u  +  \rr_s^{(t)}(u)  (\II - \RR) \PP \ee_u 
        = \rr_s^{(t)} - \rr_s^{(t)}(u) (\II - (\II - \RR) \PP)\ee_u
    \end{align*}
    Assuming that $\DDelta(v) - (\hat{\DDelta}(v) + \tilde{\Delta}^{(t)}) = \rr_a^\top \MM\RR^{-1}\rr_s^{(t)}$, it follows that
    \begin{align*}
        \DDelta(v) - (\hat{\DDelta}(v) + \tilde{\Delta}^{(t+1)}) &= \DDelta(v) - (\hat{\DDelta}(v) + \tilde{\Delta}^{(t)} + \rr_a(u)\cdot\rr_s^{(t)}(u)) \\
        &=\rr_a^\top \MM\RR^{-1}\rr_s^{(t)} -  \rr_a(u)\cdot\rr_s^{(t)}(u)\\
        &= \rr_a^\top \MM\RR^{-1}\left(\rr_s^{(t+1)} + \rr_s^{(t)}(u) (\II - (\II - \RR) \PP)\ee_u\right) -  \rr_a(u)\cdot\rr_s^{(t)}(u)\\
        &= \rr_a^\top \MM\RR^{-1}\rr_a^{(t+1)}, 
    \end{align*}
    which completes the proof.

\subsection{Proof of Lemma~\ref{lem:6}}
    On the one hand, since $\rr_s \ge \zero$, by Lemma~\ref{lem:5}, we have $\DDelta(v) - (\hat{\DDelta}(v)+\tilde{\Delta}) \ge 0$; on the other hand, the condition in Line 2 of Algorithm~\ref{alg:3} indicates that, after termination, the vector $\rr_s \le \epsilon \RR\one$, then we have that
    \begin{equation*}
        \DDelta(v) - (\hat{\DDelta}(v)+\tilde{\Delta}) =\rr_a^\top \MM\RR^{-1}\rr_a  \le \epsilon \rr_a^\top \MM\one =\epsilon\cdot(\rr_a)_\tsum.
    \end{equation*}
    Hence the estimator $\tilde{\Delta}$ returned by Algorithm~\ref{alg:3} satisfies $0 < \DDelta(v) - (\hat{\DDelta}(v) +\tilde{\Delta}) \le \epsilon\cdot(\rr_a)_\tsum $.

\subsection{Proof of Lemma~\ref{lem:7}}

We present the proof for absolute errors; the relative error case follows analogously through error bound transformations and is thus omitted. 

Let $\mathcal{S}^*$ denote the optimal size-$k$ node set.
We first show that any node $i \in T$ must belong to $\mathcal{S}^*$. By definition of $T$, we have $\hat{\aaa}(i) \geq \hat{\aaa}_{\langle k+1 \rangle} + \epsilon$. Applying the error bound yields $\aaa(i) \geq \hat{\aaa}(i) \geq \hat{\aaa}_{\langle k+1 \rangle}+\epsilon.$
For any node $j \notin \mathcal{S}^*$, it holds that
$$
\aaa(j) \leq \aaa_{\langle k +1\rangle} \leq \hat{\aaa}_{\langle k+1 \rangle} + \epsilon \leq \aaa(i).
$$
This shows $\aaa(i) \geq \aaa(j)$ for all $j \notin \mathcal{S}^*$, which implies $i \in \mathcal{S}^*$.

Next, we prove that $\mathcal{S}^* \setminus T \subseteq C$. Suppose for contradiction that there exists $i \in \mathcal{S}^*$ with $i \notin T \cup C$. By definition of $C$, this requires $\hat{\aaa}(i) < \hat{\aaa}_{\langle k \rangle} - \epsilon.$
Using the error bound, we obtain
$$
\aaa(i) \leq \hat{\aaa}(i) + \epsilon < \hat{\aaa}_{\langle k \rangle}.
$$
However, since $i \in \mathcal{S}^*$, optimality implies $\aaa(i) \geq \aaa_{\langle k \rangle} \geq \hat{\aaa}_{\langle k \rangle}$, which contradicts the above inequality. Therefore, $\mathcal{S}^* \setminus T \subseteq C$ must hold.

\subsection{Proof of Theorem~\ref{the:3}}
We first proof Lemma~\ref{lem:alg1rt}, which explains the upper bound of the time complexity of Algorithm~\ref{alg:2}.
\begin{lemma}\label{lem:alg1rt}
    An upper bound on the running time of Algorithm~\ref{alg:2} is $O(\frac{d_{\max}}{\alpha_{\min}}n\log \frac{1}{n})$.
\end{lemma}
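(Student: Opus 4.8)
The plan is to bound the running time of Algorithm~\ref{alg:2} as (number of sweeps) $\times$ (cost per sweep), organizing the pushes around the synchronous residual recurrence $\rr^t=(\PP^\top(\II-\RR))^t\one$ identified just before the algorithm. Grouping the FIFO execution into BFS-style sweeps -- sweep $0$ handles the $n$ nodes present at initialization, and sweep $t+1$ handles exactly the nodes that received residual during sweep $t$ -- each push of a node $v$ costs $O(d^+_v)\le O(d_{\max})$, so a single sweep costs at most $\sum_{v\in V}d^+_v=m\le d_{\max}n$. It then suffices to prove that only $T=O(\alpha_{\min}^{-1}\log(1/\epsilon))$ sweeps can contain an active (above-threshold) node; multiplying the two estimates yields the claimed $O(\frac{d_{\max}}{\alpha_{\min}}n\log\frac1\epsilon)$. (I read the ``$\log\frac1n$'' in the statement as a typo for ``$\log\frac1\epsilon$'', which is the factor appearing in the first term of Theorem~\ref{the:3}.)

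The quantitative engine is the geometric decay of the residual mass. Assuming $\PP$ is row-stochastic (no dangling nodes), $\PP\one=\one$ gives $\one^\top\PP^\top=\one^\top$, so for every nonnegative vector $\xx$ we have $\one^\top\PP^\top(\II-\RR)\xx=\sum_i(1-\alpha_i)\xx(i)\le(1-\alpha_{\min})\|\xx\|_1$. Applied to the iterate $\rr^t$, this yields $\|\rr^t\|_1\le(1-\alpha_{\min})^t\|\rr^0\|_1=(1-\alpha_{\min})^t n$, and since every entry is nonnegative, $\|\rr^t\|_\infty\le\|\rr^t\|_1\le(1-\alpha_{\min})^t n$. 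Using $-\log(1-\alpha_{\min})\ge\alpha_{\min}$, after $T=\lceil\alpha_{\min}^{-1}\log(n/\epsilon)\rceil$ applications of the operator every entry falls below the threshold $\epsilon$; in the small-$\epsilon$ regime of Theorem~\ref{the:3} (e.g.\ $\epsilon\le 1/n$) we have $\log(n/\epsilon)=O(\log(1/\epsilon))$, so $T=O(\alpha_{\min}^{-1}\log(1/\epsilon))$ as required.

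The step I expect to be delicate -- and the main obstacle -- is reconciling this synchronous decay estimate with the genuinely asynchronous, thresholded execution of Algorithm~\ref{alg:2}. Because a node with residual $\le\epsilon$ is skipped and keeps its residual, sub-threshold mass is retained and may accumulate with mass delivered in later sweeps, so a node can be re-triggered after its sweep-$t$ ``fresh'' contribution has already decayed below $\epsilon$; consequently the asynchronous push schedule need not align with the synchronous sweeps $\rr^0,\rr^1,\dots$. To close this gap I would couple Algorithm~\ref{alg:2} to the synchronous process above, charging each push to the sweep index of the most recently delivered fresh residual, bounding via the geometric decay the number of sweep indices whose per-node fresh contribution can exceed $\epsilon$, and separately controlling the extra pushes triggered purely by accumulated leftovers through the global monotonicity inequality that each push decreases $\|\rr_a\|_1$ by exactly $\alpha_v\rr_a(v)\ge0$. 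Making this leftover bookkeeping precise is the crux; the geometric-decay bound, the per-sweep cost estimate, and the final $\log(n/\epsilon)=O(\log(1/\epsilon))$ simplification are all routine.
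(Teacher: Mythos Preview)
Your reading of the $\log\frac{1}{n}$ as a typo for $\log\frac{1}{\epsilon}$ is correct; this is what the paper proves.

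The approach you propose is genuinely different from the paper's, and the obstacle you flag is real. You try to bound (number of sweeps)$\times$(cost per sweep) by controlling the synchronous iterate $\rr^t=(\PP^\top(\II-\RR))^t\one$, and then you have to reconcile this with the thresholded asynchronous execution. The ``leftover bookkeeping'' you allude to is not just delicate, it is the whole problem: a node may never receive fresh residual above $\epsilon$ in any single sweep yet still be triggered by accumulation, and the vague charging scheme you describe does not rule out many such triggerings. Without a concrete argument here, the proposal does not yet yield the $\log\frac{1}{\epsilon}$ bound (the naive per-push mass loss $\alpha_v\rr_a(v)\ge\alpha_{\min}\epsilon$ gives only $O(n/(\alpha_{\min}\epsilon))$ pushes).

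The paper sidesteps the coupling difficulty entirely by bounding the \emph{total number of pushes} directly rather than (sweeps)$\times$(cost per sweep). It defines rounds via a dummy marker in the FIFO queue, lets $S^{(i)}$ be the active set at the start of round $i{+}1$, and uses the threshold itself as the lever: every $v\in S^{(i)}$ has residual $>\epsilon$ while every $v\notin S^{(i)}$ has residual $\le\epsilon$, so the active set carries at least its proportional share $\sum_{v\in S^{(i)}}\rr_a^{(i)}(v)\ge\frac{|S^{(i)}|}{n}(\rr_a)^{(i)}_{\tsum}$. Hence one round shrinks the total residual by a factor at most $(1-\alpha_{\min}|S^{(i)}|/n)$, and multiplying over rounds gives $(\rr_a)^{(i+1)}_{\tsum}\le n\exp(-\frac{\alpha_{\min}}{n}\sum_t|S^{(t)}|)$. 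This inequality bounds $\sum_t|S^{(t)}|$, i.e.\ the total number of pushes, by $O(\frac{n}{\alpha_{\min}}\log\frac{1}{\epsilon})$; multiplying by $d^+_{\max}$ per push finishes. The key idea you are missing is precisely this: the threshold that complicates your synchronous coupling is exactly what guarantees active nodes carry enough mass to make the geometric decay hold for total pushes, with no need to track which sweep ``owns'' which residual.
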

\begin{proof}
During the processing, we add a dummy node that does not actually exist. This node is initially placed at the head of the queue and is re-appended to the queue each time it is popped. The set of nodes in the queue when this dummy node is processed for the $(i+1)$-th time is regarded as $S^{(i)}$, and the residue vector at this time is regarded as $\rr_a^{(i)}$. The sum of the vector $\rr_a^{(i)}$ is denoted as $(\rr_a)_{\tsum}^{(i)}$. The process of handling this set is considered the $(i + 1)$-th iteration. 
In the context of the $(i+1)$-th iteration, when node $v\in S^{(i)}$ is about to be processed, it holds that $\rr_a(v) \ge \rr_a^{(i)}(v)$. Upon the completion of this operation, the sum of residual vector is decreased by $\alpha_v\rr_a(v)$. Consequently, by the conclusion of the $(i+1)$-th iteration, the total reduction in the sum of residue vector amounts to:
        \begin{align}\label{equo:sumprepare}
            (\rr_a)_{\tsum}^{(i)} - (\rr_a)_{\tsum}^{(i+1)} = \sum_{v\in S^{(i)}} \alpha_v\rr_a(v) \ge  \sum_{v\in S^{(i)}}\alpha_v\rr^{(i)}_a(v).
        \end{align}
        Given that the bound for any node $v$ is $\epsilon$, we obtain: 
        \begin{equation*}
            \frac{\sum_{v\in S^{(i)}}\rr_a^{(i)}(v)}{|S^{(i)}|}\geq \epsilon\quad \text{and} \quad \frac{\sum_{v\notin S^{(i)}}\rr_a^{(i)}(v)}{|V\setminus S^{(i)}|}\leq \epsilon.
        \end{equation*}
        Therefore, it follows that:
        \begin{equation*}
            \frac{\sum_{v\in S^{(i)}}\rr_a^{(i)}(v)}{|S^{(i)}|}\geq \frac{\sum_{v\in S^{(i)}}\rr_a^{(i)}(v) + \sum_{v\notin S^{(i)}}\rr_a^{(i)}(v)}{|S^{(i)}| + |V\setminus S^{(i)}|} = \frac{(\rr_a)_{\tsum}^{(i)}}{n}.
        \end{equation*}
        Substituting the above expression into Eq.~\eqref{equo:sumprepare}, we obtain
        \begin{align*}
            (\rr_a)_{\tsum}^{(i+1)} &\le  (\rr_a)_{\tsum}^{(i)} - \sum_{v\in S^{(i)}}\alpha_v\rr^{(i)}_a(v) \le(\rr_a)_{\tsum}^{(i)} - \alpha_{\min}\sum_{v\in S^{(i)}}\rr^{(i)}_a(v) \\
            &\le (1-\frac{\alpha_{\min}}{n}|S^{(i)}|)(\rr_a)_{\tsum}^{(i)}\\
            &\le \prod_{t=0}^{i}(1-\frac{\alpha_{\min}}{n}|S^{(t)}|) (\rr_a)_{\tsum}^{(0)}.
        \end{align*}
        By utilizing the fact that $1-x\leq e^{-x}$, we obtain
        \begin{align}\label{equ:timefor}
            (\rr_a)^{(i+1)}_{\tsum} \leq \exp\left(-\frac{\alpha_{\min}}{n} (\sum_{t=0}^{i}|S^{(t)}|)\right)n.
        \end{align}

        Let $T^{(i+1)}=\sum_{t=0}^{i}|S^{(t)}|$ be defined as the total number of updates before the start of the $(i+1)$-th iteration. According to Eq.~\eqref{equ:timefor}, to satisfy $(\rr_a)_{\tsum}^{(i+1)}\leq \epsilon n$, it suffices to find the minimum number of updates that meets the following conditions:
        \begin{equation*}
            \exp\left(-\frac{\alpha_{\min}}{n} T^{(i+1)}\right)\leq \epsilon  \leq\exp\left(-\frac{\alpha_{\min}}{n} T^{(i)}\right).
        \end{equation*}
        Thus, we obtain $T^{(i)}\leq \frac{1}{\alpha_{\min}}n\log\frac{1}{\epsilon}\leq T^{(i+1)}$, Given the fact that $T^{(i+1)}-T^{(i)}=|S^{(i)}|\leq n$, we further derive
        \begin{align*}
            T^{(i+1)}\leq T^{(i)}+n\leq \frac{1}{\alpha_{\min}}n\log\frac{1}{\epsilon}+n.
        \end{align*}

        For node $v$, the push operation reduces $(\rr_a)_{\tsum}$ by $\alpha_v \rr_a(v)$. Consequently, after incurring a total number of updates $T$,  the reduction in $(\rr_a)_{\tsum}$ is at least $\epsilon \alpha_v  T$. Hence starting from the state of $(\rr_a)_{\tsum} \le  \epsilon n$, the time cost $T$ is bounded by $O(n/\alpha_{\min})$, thereby constraining the overall time complexity of Algorithm~\ref{alg:2} to $O(\frac{d^+_{\max}}{\alpha_{\min}}n\log\frac{1}{\epsilon})$.
\end{proof}

For the refinement stage, we first proof the time complexity of Line 7 in Algorithm~\ref{alg:4}, when the absolute error parameter $\epsilon = 1$. Assuming that estimating node is $v \in V$, the contribution of $\rr_s$ at node $u \in V$ each time is $\rr_s(u)$. Due to the existence of boundary $\hh =  \frac{1}{(\rr_a)_\tsum}\RR \one$, its minimum value is $ \alpha_u$, and the upper bound on the total contribution is $(1-\sss(v))\MM(u,v)$. Therefore, the upper bound on the number of updates at node $u$ is $\frac{(\rr_a)_\tsum\cdot(1-\sss(v))\MM(u,v)}{ \alpha_u}$. Therefore, assuming that the candidate set returned in Line 3 of Algorithm~\ref{alg:4} is $C_a$, the upper bound of the time complexity during the first round of execution in the refinement stage is
\begin{align*}
    \sum_{v\in C_a}\sum_{u\in V}\frac{(\rr_a)_\tsum\cdot(1-\sss(v))M(u, v)d^-_u}{ \alpha_u} &\le \frac{(\rr_a)_\tsum\cdot d^-_{\max}}{ \alpha_{\min}}\sum_{u\in V}\sum_{v\in C_a}M(u,v)\notag \\
    & = \frac{(\rr_a)_\tsum\cdot d^-_{\max}}{\alpha_{\min}}\cdot \sum_{v\in C_a}\rho_v \le \frac{\epsilon'\cdot d^-_{\max}n}{\alpha_{\min}}\cdot \sum_{v\in C_a}\rho_v.
\end{align*}

Now we consider when absolute error parameter $\epsilon = \frac{1}{2},\frac{1}{4},\ldots,\frac{1}{2n},\ldots$. We assume that the error in one round of the process is $\epsilon'$, so the error in the previous round of the process is $2\epsilon'$. The upper bound of time complexity of this round is
\begin{equation*}
    \sum_{v\in C_a}\sum_{u\in V}\frac{2\epsilon'/(\rr_a)_\tsum\cdot(1-\sss(v))(\MM\RR^{-1}\RR\one)(u)d^-_u}{\epsilon'/(\rr_a)_\tsum \alpha_u} \le \frac{2}{\alpha_{\min}}\sum_{v\in C_a}\sum_{u\in V}(\MM\one)(u)d^-_u =  |C_a|\cdot \frac{2m}{\alpha_{\min}}
\end{equation*}

When the current error $\epsilon'$ is $\frac{1}{2i}$, the upper bound of the total time complexity from $\epsilon'=\frac{1}{2}$ to $\epsilon'=\frac{1}{2i}$ is $|C_a|\cdot\frac{2i\cdot m}{\alpha_{\min}}$. Hence, when $\epsilon' < \gap$, the upper bound of the total time complexity is $|C_a|\cdot\frac{m}{\gap\cdot \alpha_{\min}}$.

Hence, the upper bound of time complexity of Algorithm~\ref{alg:4} is
\begin{equation*}
    O(\frac{d^+_{\max}n}{\alpha_{\min}}(\log\frac{1}{\epsilon} + \epsilon\sum_{v\in C_a}\rho_v) + \frac{|C_a|m}{\gap\cdot \alpha_{\min}}).
\end{equation*}
Under normal circumstances, when $\epsilon$ is sufficiently small, we obtain $|C_a|$ as either 0 or a relatively small constant. Therefore, we can derive the upper bound of time complexity as $O(\frac{d^+_{\max}n}{\alpha_{\min}}\log\frac{1}{\epsilon} + \frac{m}{\gap\cdot \alpha_{\min}})$ in the general case.

\section{Forest Sampling Algorithm}\label{app:forest}

For a graph $ \calG = (V, E) $, a \emph{spanning subgraph} of $ \calG $ is a subgraph with node set $ V $ and edge set being a subset of $ E $. A \emph{converging tree} is a weakly connected graph where exactly one node, called the \emph{root}, has out-degree 0, while all other nodes have out-degree 1. An isolated node is considered a trivial converging tree with itself as the root. A \emph{spanning converging forest} of $ \calG $ is a spanning subgraph in which every weakly connected component is a converging tree. This structure coincides with the notion of an \emph{in-forest} as introduced by \cite{agaev2001spanning} and further studied by \cite{chebotarev2002forest}.
Spanning converging forests are closely related to the fundamental matrix of our model. Following the Lemma~\ref{lem:forest} and its Proof~\ref{proof:forest}, it can be shown that the any entry of the fundamental matrix can be represented in the form of spanning converging forest. 

Wilson's algorithm~\cite{wilson1996generating} provides an efficient method for generating uniform spanning trees by leveraging loop-erased absorbing random walks. For any graph $\calG=(V,E)$, Wilson's algorithm can be adapted to generate a uniform spanning converging forest, by using the method similar to that in ~\cite{avena2018two,sun2023opinion}. The details of the algorithm are presented in~\ref{alg:randomForest}. 

\begin{algorithm}[H]
\caption{\textsc{RandomForest}($\mathcal{G}$, $\RR$)}
\label{alg:randomForest}
\Input{graph $\mathcal{G}$, resistance matrix $\RR$.}
\Output{root index vector RootIndex.}

\For{$i \gets 1$ \KwTo $n$}{
    InForest[$i$] $\gets$ false; Next[$i$] $\gets$ -1; RootIndex[$i$] $\gets$ 0\;
}

\For{$i \gets 1$ \KwTo $n$}{
    $u \gets i$\;
    \While{not InForest[$u$]}{
        \eIf{\textsc{Rand}() $\leq \alpha_u$}{
            InForest[$u$] $\gets$ true\\
            Next[$u$] $\gets -1$\\
            RootIndex[$u$] $\gets$ $u$\\
        }{
            $u \gets$ Next[$u$] $\gets$ \textsc{RandomSuccessor}($u$, $\mathcal{G}$)\\
        }
    }
    RootNow $\gets$  RootIndex[$u$]\\
    \For{$u \gets i$; not InForest[$u$]; $u \gets$ Next[$u$]}{
        InForest[$u$] $\gets$ true\\
        RootIndex[$u$] $\gets$ RootNow\\
    }
}
\Return{} RootIndex\\
\end{algorithm}
The Algorithm~\ref{alg:randomForest} initializes three vectors: \textit{InForest} (marks nodes added to the spanning forest), \textit{Next} (tracks random walk steps), and \textit{RootIndex} (records root assignments), all set to false, -1, and 0 respectively (Line 1).
For each node in order (Line 3), it performs a loop-erased absorbing random walk that terminates either: (1) with probability $\alpha_u$ (making $u$ a new root, Line 6-7), or (2) upon hitting existing forest nodes (Line 5). The walk's path is recorded in \textit{Next}.
After termination, the algorithm backtracks along the walk path (Line 13-14), adding all nodes to the forest and assigning them the current root index. This continues until all nodes are processed (Line 3), returning the \textit{RootIndex} vector representing the sampled rooted spanning forest (Line 15).
Then, based on Lemma~\ref{lem:forest}, we obtained the implementation of our algorithm \textsc{Forest}, and the pseudocode is shown in Algorithm~\ref{alg:forest}.

\begin{algorithm}[H]
\label{alg:forest}
\caption{Forest($\mathcal{G}$, $\RR$, $\sss$, $l$)}
\Input{graph $\mathcal{G}$, resistance matrix $\RR$, internal opinion vector $\sss$, sample count $l$.}
\Output{the target set $\hat{T}$.}
\textbf{Initialize} : $\hat{T} \gets \emptyset$; $\hat{\rrho} \gets \mathbf{0}$\\
\For{$t \gets 1$ \KwTo $l$}{
    RootIndex $\gets$ \textsc{RandomForest}($\mathcal{G},\ \RR$)\\
    \For{$i \gets 1$ \KwTo $n$}{
        $u  \gets $ RootIndex[$i$]\\
        $\hat{\rrho}_u \gets \hat{\rrho}_u + 1$\\
    }
}

$\hat{\rrho} \gets \hat{\rrho}/l$\\
\For{$i = 1$ \KwTo $n$}{
    $u \gets \arg \max_{v\in V\setminus \hat{T}} \hat{\rrho}_v (1 - \sss_v)$\\
    $\hat{T} \gets \hat{T} \cup \{u\}$\\
}
\Return{$\hat{T}$}\\
\end{algorithm}

\section{Additional Experiments}\label{sec:exp}
\subsection{Efficiency}


Figure~\ref{fig:rt} and Table~\ref{tab:complexity} collectively present a comprehensive performance analysis of our proposed MIS algorithm under different parameters and network conditions. Figure~\ref{fig:rt} demonstrates the algorithm's runtime performance across all tested networks under various resistance coefficient distributions as parameter k varies. The results indicate that although fluctuations in the potential influence distribution near the k-th most influential node cause minor variations, MIS consistently maintains superior performance compared to both RWB and \textsc{Forest} algorithms in all test cases.

\begin{figure}[H]
    \centering
    \includegraphics[width=1\linewidth]{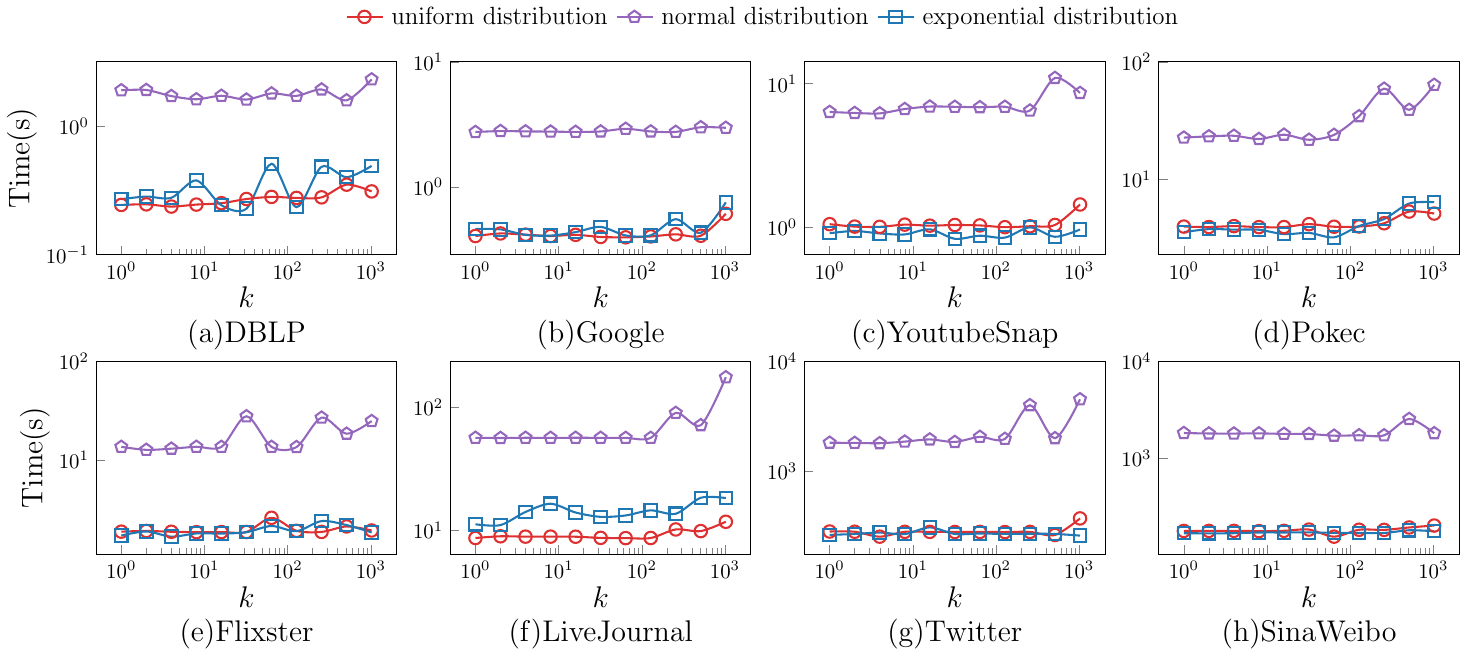}
    \caption{Running time of algorithm MIS under different resistance coefficient distributions.}
    \label{fig:rt}
\end{figure}

Complementing the runtime analysis, Table~\ref{tab:complexity} evaluates computational efficiency by examining the average number of updates per node during algorithm convergence under the specific parameter setting of k=64 and uniform resistance coefficients. The update ratios for all networks remain within a practical range of 27.06 to 249.28, with Twitter maintaining controllable levels despite its largest scale while Google achieves optimal efficiency. These results confirm that the algorithm achieves near-linear time complexity in practice, with stable constant factors that do not increase significantly with network size, demonstrating strong scalability for large-scale network applications.

\begin{table}[t]
    \centering
    \caption{Updates per node across networks.}
    \label{tab:complexity}
    \begin{adjustbox}{max width=\textwidth}
        \begin{tabular}{lcccccccc}
        \toprule
        Network & DBLP & Google & YoutubeSnap & Pokec & Flixster & LiveJournal & Twitter & SinaWeibo \\
        \midrule
        Ratio & 49.93 & 27.06 & 49.34 & 126.86 & 114.84 &106.05 & 249.28 & 78.77 \\
        \bottomrule
        \end{tabular}%
    \end{adjustbox}
\end{table}

\subsection{Accuracy}


Extending the analysis from Figures~\ref{fig:3} and ~\ref{fig:accuracy}, we conduct a comprehensive evaluation of accuracy performance across multiple experimental configurations. Figures~\ref{fig:6} and \ref{fig:7} examine performance under normal distribution conditions. Similarly, Figures~\ref{fig:4} and \ref{fig:5} demonstrate algorithm behavior under exponential distribution. 
The complete set of experimental results reveals several important findings. First, our proposed algorithms (RWB, \textsc{Forest}, and MIS) consistently outperform all baseline methods across every tested condition. While the RWB algorithm fails to complete execution within practical time limits for certain cases due to its computational complexity, the algorithms \textsc{Forest} and MIS successfully deliver results in all scenarios. More significantly, the MIS algorithm achieves exact solutions in every case while simultaneously maintaining near-perfect accuracy (NDCG $\approx$ 1) in sequence ordering. This combination of guaranteed precision and optimal ordering performance clearly establishes the superiority of our MIS approach over both baseline methods and our own alternative proposals.

\begin{figure}[H]
    \centering
    \includegraphics[width=0.91\linewidth]{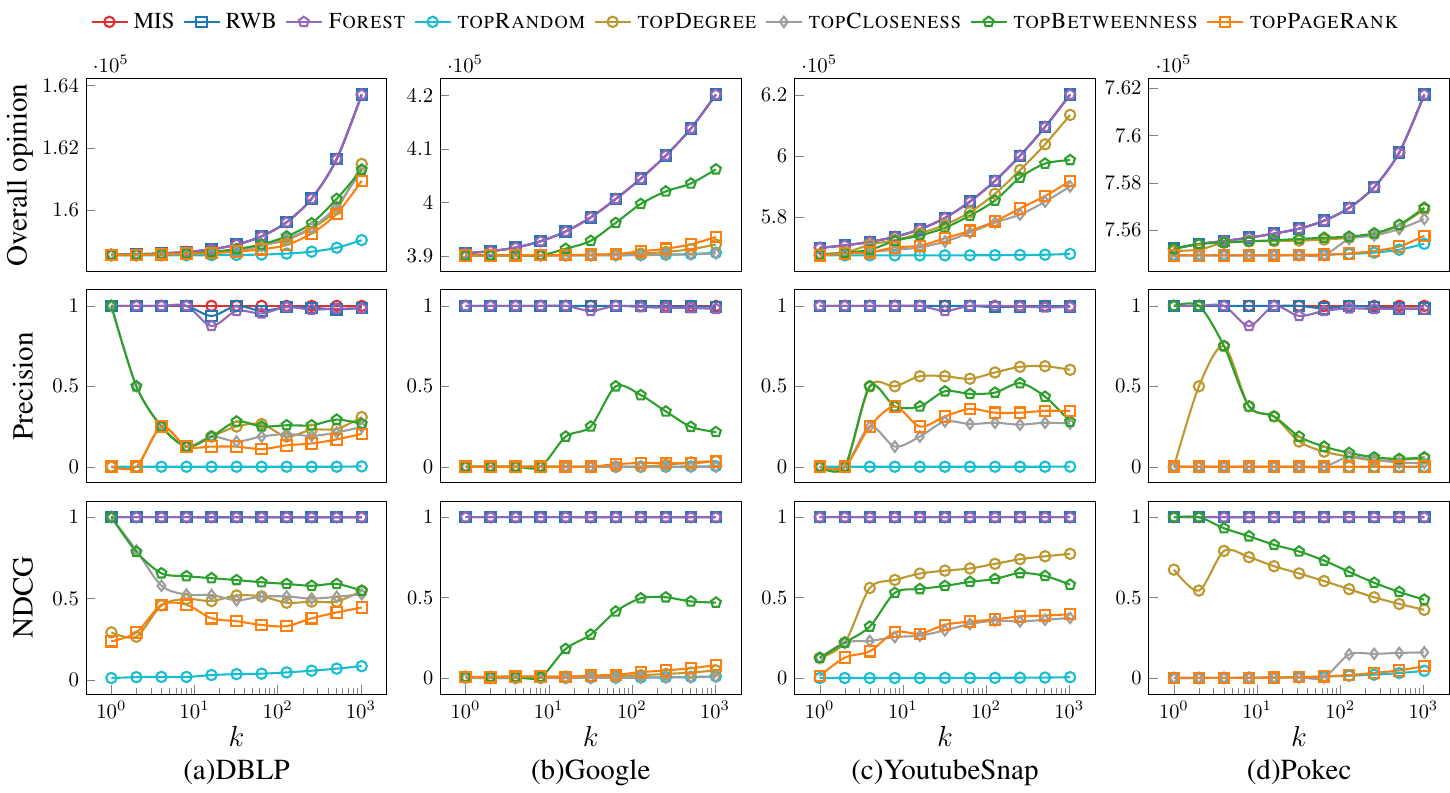}
    \vspace{-8pt}
    \caption{Performance of algorithms in accuracy under normal distribution.}
    \vspace{-12pt}
    \label{fig:6}
\end{figure}

\begin{figure}[H]
    \centering
    \includegraphics[width=0.91\linewidth]{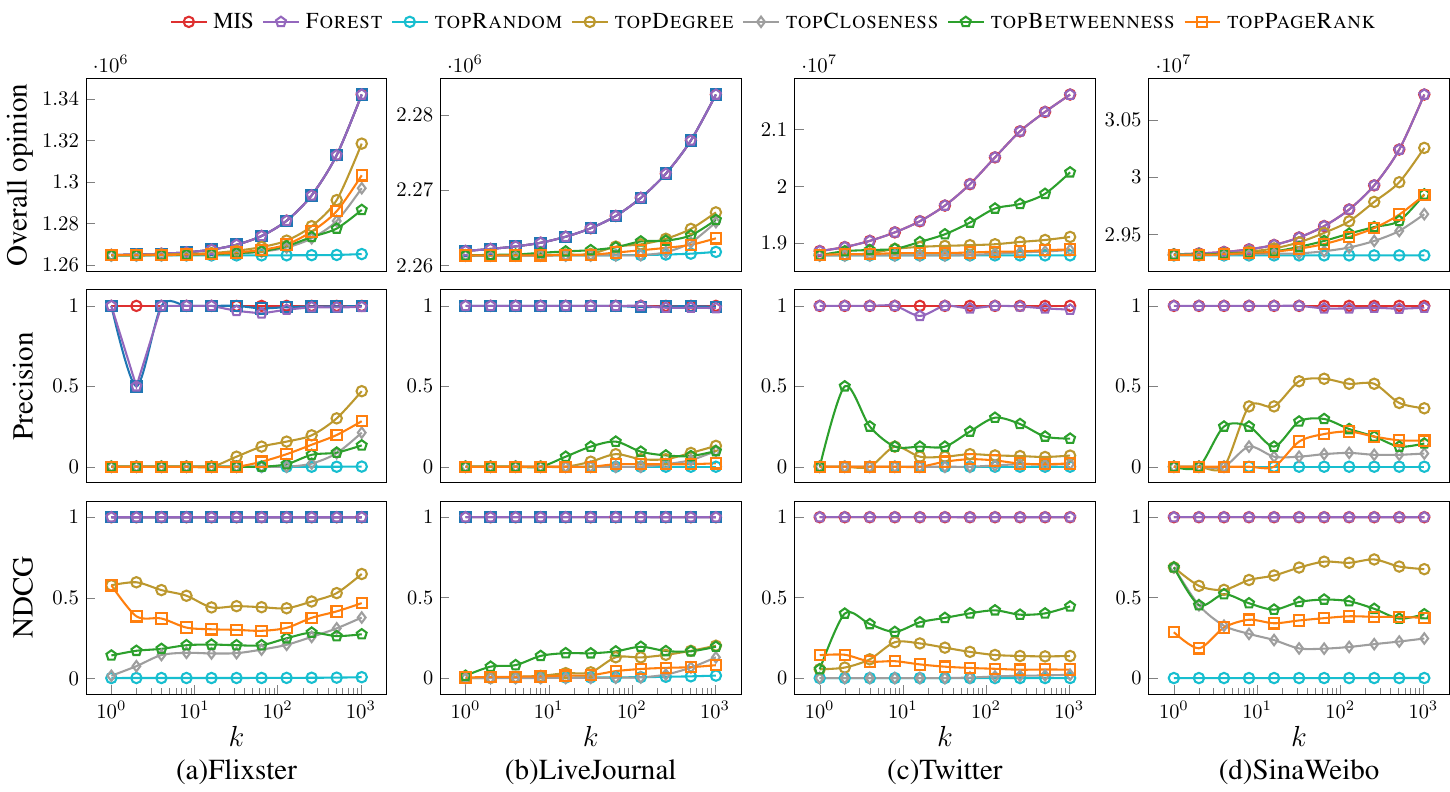}
    \vspace{-8pt}
    \caption{Performance of algorithms in accuracy under normal distribution.}
    \vspace{-12pt}
    \label{fig:7}
\end{figure}

\begin{figure}[H]
    \centering
    \includegraphics[width=0.91\linewidth]{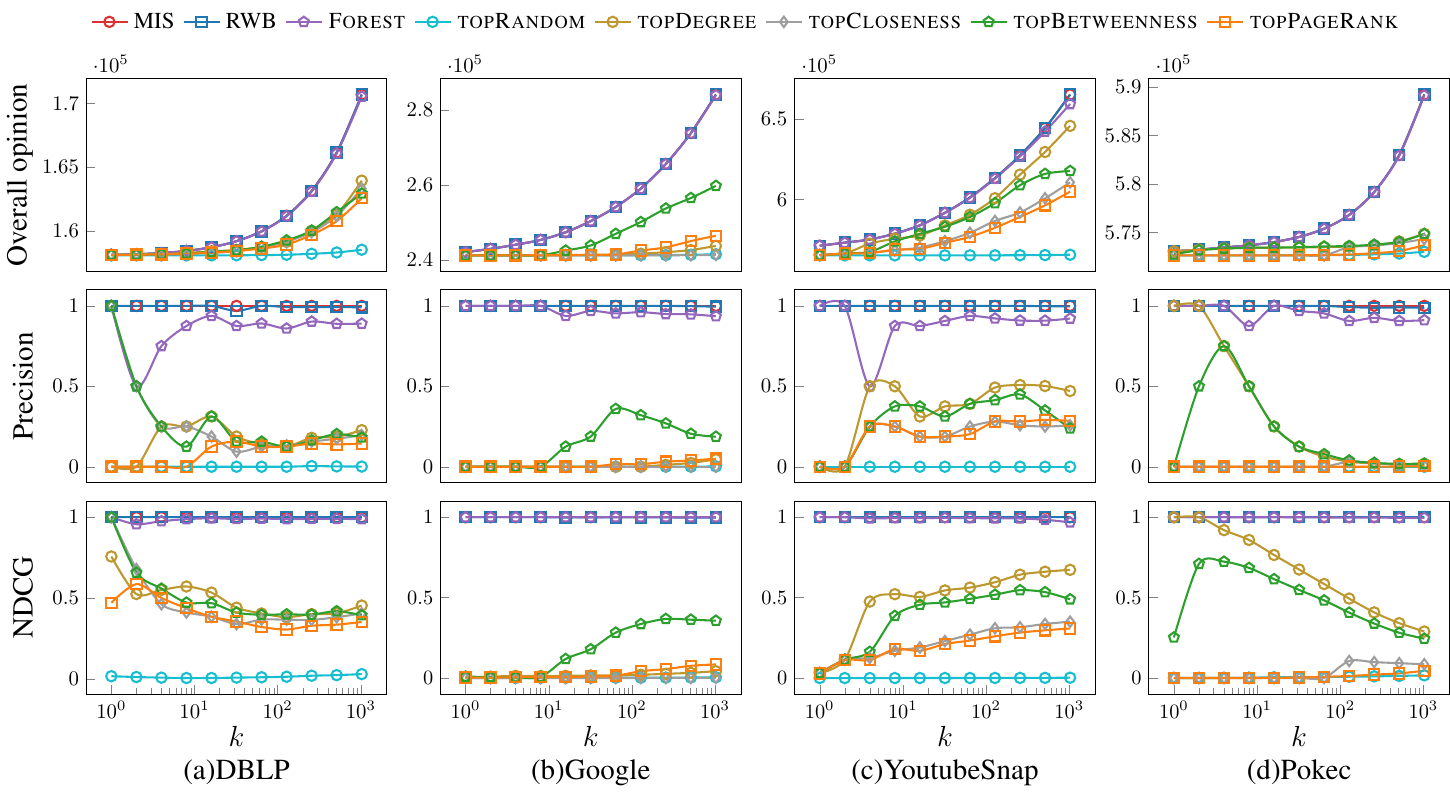}
    \vspace{-8pt}
    \caption{Performance of algorithms in accuracy under exponential distribution.}
    \vspace{-12pt}
    \label{fig:4}
\end{figure}

\begin{figure}[H]
    \centering
    \includegraphics[width=0.91\linewidth]{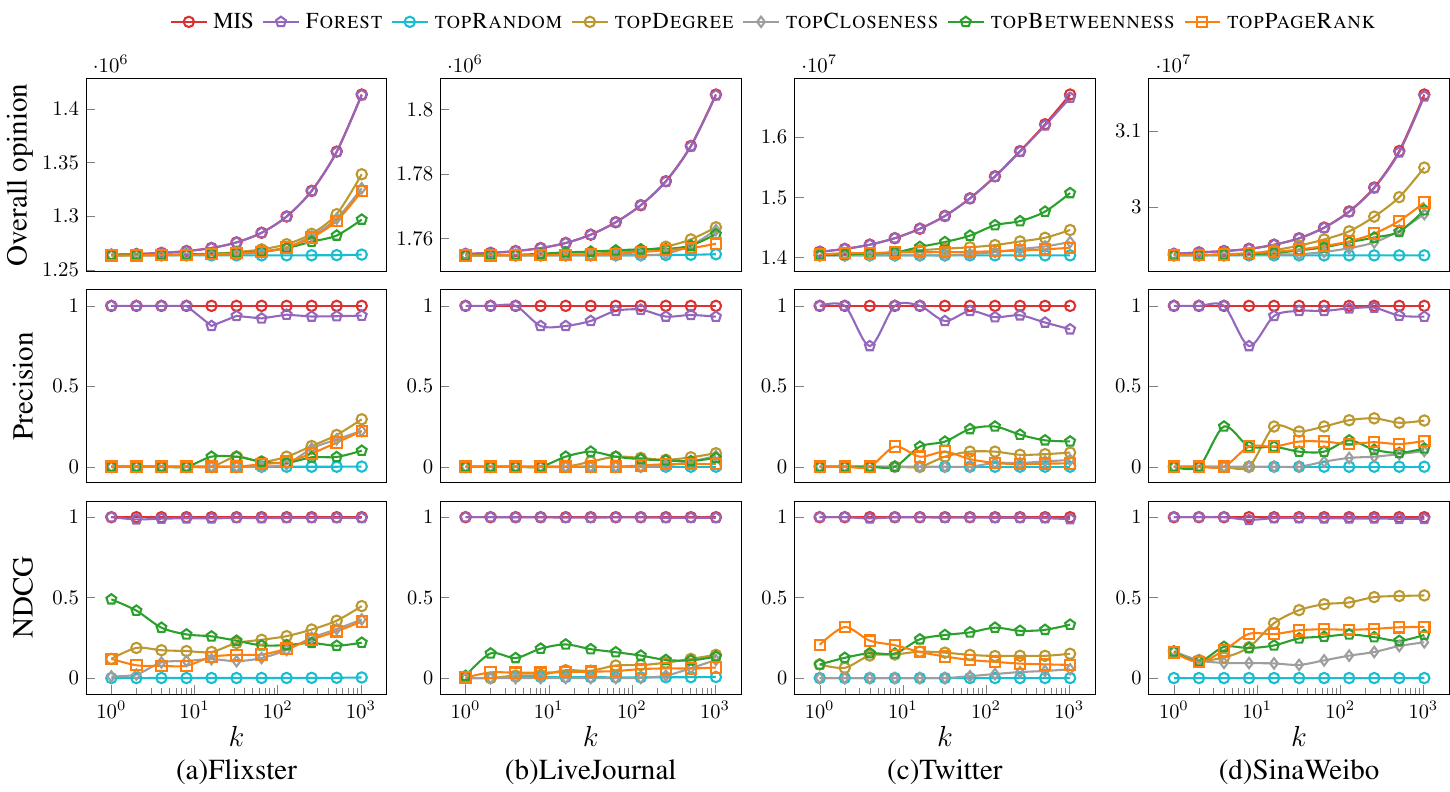}
    \vspace{-8pt}
    \caption{Performance of algorithms in accuracy under exponential distribution.}
    \vspace{-12pt}
    \label{fig:5}
\end{figure}

\section{Dataset Details}\label{sec:data}
The study utilizes eight benchmark datasets obtained from the Koblenz Network Collection~\cite{kunegis2013konect}, SNAP~\cite{leskovec2016snap}, and Network Repository~\cite{rossi2015network}. 

DBLP is an undirected co-authorship network where nodes represent authors and edges indicate co-authorship relationships. The dataset uses the largest connected component, with publication venues (conferences or journals) defining ground-truth communities, retaining the top 5,000 high-quality communities each containing at least three nodes. Google is a directed network where nodes represent web pages and edges represent hyperlinks, originating from the 2002 Google Programming Contest. YouTube is an undirected social network where nodes represent users and edges represent friendships, with communities defined by user-created groups, similarly retaining the top 5,000 communities with at least three nodes and using the largest connected component. Pokec is a Slovak social network where nodes represent users and directed edges represent friendships, containing anonymized user attributes such as gender and age. Flixster is an undirected movie social network where nodes represent users and edges represent social connections. LiveJournal is a directed online community network where nodes represent users and edges represent friend relationships. Twitter is a directed social network where nodes represent users and edges represent follower relationships. SinaWeibo is a directed microblogging social network where nodes represent users and edges represent social connections. Detailed statistical characteristics of each network are provided in Table~\ref{tab:datasets}.

\end{document}